\theoremstyle{plain}
\newtheorem{theorem}{Theorem}
\newtheorem{lemma}[theorem]{Lemma}
\newtheorem{proposition}[theorem]{Proposition}
\newtheorem{remark}[theorem]{Remark}
\newcommand{\SweepCut}{\ensuremath{\texttt{SweepCut}}\xspace}
\newcommand{\ApxFiedler}{\ensuremath{\texttt{ApxFiedler}}\xspace}
\newcommand{\futuretodo}[1]{}
\title{A Schur Complement Cheeger Inequality}
\author{Aaron Schild \\ University of California, Berkeley EECS \\ \href{mailto:aschild@berkeley.edu}{aschild@berkeley.edu} \footnote{Supported by NSF grant CCF-1816861.}}
\begin{document}

\maketitle

\begin{abstract}
Cheeger's inequality shows that any undirected graph $G$ with minimum nonzero normalized Laplacian eigenvalue $\lambda_G$ has a cut with conductance at most $O(\sqrt{\lambda_G})$. Qualitatively, Cheeger's inequality says that if the relaxation time of a graph is high, there is a cut that certifies this. However, there is a gap in this relationship, as cuts can have conductance as low as $\Theta(\lambda_G)$.

To better approximate the relaxation time of a graph, we consider a more general object. Specifically, instead of bounding the mixing time with cuts, we bound it with cuts in graphs obtained by Schur complementing out vertices from the graph $G$. Combinatorially, these Schur complements describe random walks in $G$ restricted to a subset of its vertices. As a result, all Schur complement cuts have conductance at least $\Omega(\lambda_G)$. We show that unlike with cuts, this inequality is tight up to a constant factor. Specifically, there is a Schur complement cut with conductance at most $O(\lambda_G)$.
 \end{abstract}

\section{Introduction}

For a set of vertices $S$, let $\phi_S$ denote the total weight of edges leaving $S$ divided by the total degree of the vertices in $S$. Throughout the literature, this quantity is often called the conductance of $S$. To avoid confusion with electrical conductance, we call this quantity the \emph{fractional conductance} of $S$. Let $\phi_G$ denote the minimum fractional conductance of any set $S$ with at most half of the volume (total vertex degree). Let $\lambda_G$ denote the minimum nonzero eigenvalue of the normalized Laplacian matrix of $G$. Cheeger's inequality for graphs \cite{A86,AM85} is as follows:

\begin{theorem}[Cheeger's Inequality]\label{thm:cheeger}
For any weighted graph $G$, $\lambda_G/2\le \phi_G\le \sqrt{2\lambda_G}$.
\end{theorem}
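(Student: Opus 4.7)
The plan is to prove the two inequalities separately. For the easy direction $\lambda_G/2 \le \phi_G$, I would use the variational characterization of $\lambda_G$ as the minimum Rayleigh quotient of the normalized Laplacian over vectors orthogonal to $D^{1/2}\mathbf{1}$. Given any set $S$ with $\mathrm{vol}(S) \le \mathrm{vol}(G)/2$, consider the test vector $y = \mathbf{1}_S/\mathrm{vol}(S) - \mathbf{1}_{\bar S}/\mathrm{vol}(\bar S)$; then $Dy$ is orthogonal to $\mathbf{1}$, so $D^{1/2}y$ is admissible. A direct computation shows $y^\top L y = w(S,\bar S)\bigl(1/\mathrm{vol}(S) + 1/\mathrm{vol}(\bar S)\bigr)^2$ and $y^\top D y = 1/\mathrm{vol}(S) + 1/\mathrm{vol}(\bar S)$, so the Rayleigh quotient equals $w(S,\bar S)\bigl(1/\mathrm{vol}(S) + 1/\mathrm{vol}(\bar S)\bigr) \le 2\phi_S$, yielding $\lambda_G \le 2\phi_S$.

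For the hard direction, I would carry out a sweep-cut argument on the Fiedler eigenvector. Let $f$ be a unit eigenvector of the normalized Laplacian for eigenvalue $\lambda_G$, and set $g = D^{-1/2}f$, so that the ``generalized Rayleigh quotient'' $R(g) := \frac{g^\top L g}{g^\top D g}$ equals $\lambda_G$. Shift $g$ by its weighted median $c$ (i.e.\ replace $g$ by $g' = g - c$), chosen so that both $\{v : g'(v) > 0\}$ and $\{v : g'(v) < 0\}$ have volume at most $\mathrm{vol}(G)/2$. Because $g^\top D \mathbf{1} = 0$ (orthogonality of $f$ to $D^{1/2}\mathbf{1}$), the shift only increases the denominator of $R$ without changing the numerator, so $R(g') \le \lambda_G$. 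Splitting $g' = g'_+ - g'_-$ into its positive and negative parts, one checks edge-by-edge that $(g'(u)-g'(v))^2 \ge (g'_+(u)-g'_+(v))^2 + (g'_-(u)-g'_-(v))^2$, while the denominators split exactly; the mediant inequality then gives $\min(R(g'_+), R(g'_-)) \le R(g') \le \lambda_G$. Call the winner $h$; its support has volume at most $\mathrm{vol}(G)/2$.

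The core step is the sweep-cut probabilistic argument applied to $h$. I would sample a threshold $t$ uniformly in $[0, \max_v h(v)^2]$ and consider $S_t = \{v : h(v)^2 \ge t\}$. By linearity, $\mathbb{E}[\mathrm{vol}(S_t)]$ is proportional to $\sum_v d_v h(v)^2$ and $\mathbb{E}[w(S_t, \bar S_t)]$ is proportional to $\sum_{uv} w_{uv}|h(u)^2 - h(v)^2| = \sum_{uv} w_{uv}|h(u)-h(v)|\cdot|h(u)+h(v)|$. Applying Cauchy--Schwarz and using $\sum_{uv} w_{uv}(h(u)+h(v))^2 \le 2\sum_v d_v h(v)^2$ bounds the ratio $\mathbb{E}[w(S_t,\bar S_t)]/\mathbb{E}[\mathrm{vol}(S_t)]$ by $\sqrt{2 R(h)}$. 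Hence some $S_t$ achieves $\phi_{S_t} \le \sqrt{2 R(h)} \le \sqrt{2\lambda_G}$.

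The main obstacle I expect is the sweep-cut step, specifically getting the constant right in the Cauchy--Schwarz manipulation that relates $\mathbb{E}[w(S_t,\bar S_t)]$ to $\sqrt{R(h)}\cdot \mathbb{E}[\mathrm{vol}(S_t)]$; one must be careful that the factor of $2$ coming from $(a+b)^2 \le 2(a^2+b^2)$ lines up with the factor in the theorem statement, and that the level sets really all lie on one side of the median so that the volume bound $\mathrm{vol}(S_t) \le \mathrm{vol}(G)/2$ is preserved. The shifting and positive/negative decomposition, while conceptually simple, is the bookkeeping piece most easily gotten wrong.
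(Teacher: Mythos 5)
Your proof is correct. Note, though, that the paper does not actually prove Theorem~\ref{thm:cheeger}: it is cited to \cite{A86,AM85} as background. What the paper does do is generalize exactly the two techniques you use to the Schur complement setting. Your easy direction — plugging the test vector $y = \mathbf{1}_S/\mathrm{vol}(S) - \mathbf{1}_{\bar S}/\mathrm{vol}(\bar S)$ into the Rayleigh quotient — is precisely what Lemma~\ref{lem:lower} does with $z = P^{1/2}\bigl(\mathbf{1}_A/\mathrm{vol}_G(A) - \mathbf{1}_B/\mathrm{vol}_G(B)\bigr)$ in the Schur complement, after first arguing that $\lambda_G$ lower bounds the corresponding eigenvalue of $P^{-1/2}L_I P^{-1/2}$. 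Your hard direction — median shift, positive/negative split via the mediant inequality, then a random threshold on $h(v)^2$ with Cauchy–Schwarz on $\sum w_{uv}|h(u)^2 - h(v)^2|$ — is the Trevisan-style sweep cut that the paper explicitly says it is adapting (``in a similar proof to the proof of the upper bound of Cheeger's inequality given in \cite{T11}''); the paper's version in Lemma~\ref{lem:upper-main} replaces the level-set cut $\partial S_{\ge q}$ by a Schur complement cut between $S_{\ge q}$ and $S_{\le q/2}$, uses the electrical-flow proxy $\widehat{c}^{I_q}$ (Proposition~\ref{prop:proxy-bound}) in place of the cut weight, and pays for the clipped potentials via Proposition~\ref{prop:kappa-int} instead of the bare Cauchy–Schwarz step you use — which is exactly what lets it avoid the square root. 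So you have correctly reconstructed the classical argument that the paper's main result is built on.

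One small bookkeeping point you flagged yourself is worth confirming: the argument that some fixed $t$ achieves $\phi_{S_t} \le \sqrt{2\lambda_G}$ from the bound $\mathbb{E}[w(S_t,\bar S_t)] \le \sqrt{2\lambda_G}\,\mathbb{E}[\mathrm{vol}(S_t)]$ uses that if $w(S_t,\bar S_t) > \sqrt{2\lambda_G}\,\mathrm{vol}(S_t)$ for \emph{every} $t$ in the (positive-measure) sampling range, taking expectations would contradict the displayed inequality; you should also note that $S_t \subseteq \mathrm{supp}(h)$ for $t>0$, so $\mathrm{vol}(S_t) \le \mathrm{vol}(G)/2$ and the minimum in the definition of $\phi_{S_t}$ is attained by $\mathrm{vol}(S_t)$. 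Both of these are standard and your sketch indicates you are aware of them.
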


Cheeger's inequality was originally introduced in the context of manifolds \cite{C69}. It is a fundamental primitive in graph partitioning \cite{ST14,L07} and for upper bounding the mixing time \footnote{Every reversible Markov chain is a random walk on some weighted undirected graph $G$ with vertex set equal to the state space of the Markov chain. The relaxation time of a reversible Markov chain with transition graph $G$ is defined to be $1/\lambda_G$. This quantity is within a $\Theta(\log(\pi_{\min}))$ factor of the mixing time of the chain, where $\pi_{\min}$ is the minimum nonzero entry in the stationary distribution (Theorems 12.3 and 12.4 of \cite{LPW06}).} of Markov chains \cite{S92}. Motivated by spectral partitioning, much work has been done on higher-order generalizations of Cheeger's inequality \cite{LOT12,LRTV12}. The myriad of applications for Cheeger's inequality and generalizations of it \cite{BSS13,SKM14}, along with the $\Theta(\sqrt{\lambda_G})$ gap between the upper and lower bounds, have led to a long line of work that seeks to improve the quality of the partition found when the spectrum has certain properties (for example, bounded eigenvalue gap \cite{KLLOT13} or when the graph has special structure \cite{KLPT10}.)

Here, we get rid of the $\Theta(\sqrt{\lambda_G})$ gap by taking a different approach. Instead of assuming special combinatorial or spectral structure of the input graph to obtain a tighter relationship between fractional conductance and $\lambda_G$, we introduce a more general object than graph cuts that enables a tighter approximation to $\lambda_G$. Instead of just considering cuts in the given graph $G$, we consider cuts in certain derived graphs of the input graph obtained by Schur complementing the Laplacian matrix of the graph $G$ onto rows and columns corresponding to a subset of $G$'s vertices. Specifically, pick two disjoint sets of vertices $S_1$ and $S_2$, compute the Schur complement of $G$ onto $S_1\cup S_2$, and look at the cut consisting of all edges between $S_1$ and $S_2$ in that Schur complement. Let $\rho_G$ be the minimum fractional conductance of any such cut (defined formally in Section \ref{sec:prelim}). We show that the minimum fractional conductance of any such cut is a constant factor approximation to $\lambda_G$:

\begin{theorem}\label{thm:main}
Let $G$ be a weighted graph. Then

$$\lambda_G/2\le \rho_G\le 25600\lambda_G$$
\end{theorem}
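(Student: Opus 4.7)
The plan is to handle the two sides of the inequality separately: a short Rayleigh-quotient argument for the lower bound $\lambda_G/2 \le \rho_G$, and a Fiedler-vector thresholding argument for the substantive upper bound $\rho_G \le 25600\lambda_G$.

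For the lower bound, fix disjoint nonempty $S_1, S_2 \subseteq V(G)$, set $T := S_1 \cup S_2$ and $W := V \setminus T$, and let $H$ be the Schur complement of $L_G$ onto $T$. The key identity is
$$w_H(S_1, S_2) = \mathbf{1}_{S_1}^T L_H \mathbf{1}_{S_1} = \min\{x^T L_G x : x|_{S_1} = 1,\ x|_{S_2} = 0\} = h^T L_G h,$$
where $h$ is the harmonic extension of the boundary data $\mathbf{1}_{S_1}|_T$ to all of $V$. Replacing $h$ by $h - \bar h \mathbf{1}$ for $\bar h := h^T D_G \mathbf{1}/\mathrm{vol}_G(V)$ preserves the Dirichlet energy and lands in the $D_G$-orthogonal complement of $\mathbf{1}$, so the Rayleigh characterization of $\lambda_G$ combined with the maximum principle (which forces $\bar h \in [0,1]$) yields
$$h^T L_G h \ge \lambda_G \sum_v d_G(v)(h(v) - \bar h)^2 \ge \lambda_G \bigl( V_1 (1-\bar h)^2 + V_2 \bar h^2 \bigr) \ge \lambda_G \cdot \frac{V_1 V_2}{V_1 + V_2},$$
where $V_j := \mathrm{vol}_G(S_j)$ and the final step minimizes over $\bar h$. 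Since $d_H(v) \le d_G(v)$ for $v \in T$ (the Schur correction is positive semidefinite), $\mathrm{vol}_H(S_j) \le V_j$, and the fractional conductance of the $(S_1, S_2)$-cut in $H$ is at least $\lambda_G/2$.

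For the upper bound I would take the Fiedler vector $f$, normalized so that $f^T D_G f = 1$ and $f \perp_{D_G} \mathbf{1}$, hence $f^T L_G f = \lambda_G$, and construct $S_1 := \{f \ge t_1\}$ and $S_2 := \{f \le t_2\}$ for carefully chosen thresholds. Bounding the cut is the easy half: the clipped function $x(v) := \min(1, \max(0, (f(v) - t_2)/(t_1 - t_2)))$ agrees with the required boundary data on $S_1 \cup S_2$, so by the variational characterization of the Schur complement
$$w_H(S_1, S_2) \le x^T L_G x \le \frac{1}{(t_1 - t_2)^2}\, f^T L_G f = \frac{\lambda_G}{(t_1 - t_2)^2},$$
using the pointwise bound $(x(u) - x(v))^2 \le (f(u) - f(v))^2/(t_1-t_2)^2$. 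Since $f$ has mean zero and second moment $1/\mathrm{vol}_G(V)$ under $\pi := D_G/\mathrm{vol}_G(V)$, a Chebyshev/Paley--Zygmund-type argument should produce thresholds with $\mathrm{vol}_G(S_j) = \Omega(\mathrm{vol}_G(V))$ for both $j$ and $(t_1 - t_2)^2 = \Omega(1/\mathrm{vol}_G(V))$, giving a cut of size $O(\lambda_G \cdot \mathrm{vol}_G(V))$.

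The main obstacle is converting the easy $\mathrm{vol}_G(S_j)$ lower bound into the lower bound on $\mathrm{vol}_H(S_j)$ that fractional conductance actually uses, since boundary vertices $v \in S_j$ with $W$-neighbors have $d_H(v) = C_{\mathrm{eff},G}(v, T \setminus \{v\})$ which can be arbitrarily smaller than $d_G(v)$. I would attack this either by (a) smoothing the thresholds, averaging over a narrow window and using a co-area pigeonhole to pick $(t_1, t_2)$ for which most of the $D_G$-mass in $S_j$ lies in the interior $\{v \in S_j : v$ has no $W$-neighbor$\}$, where $d_H(v) = d_G(v)$ exactly, or (b) an effective-resistance estimate exploiting the smooth gradient of $f$ across $W$ to show boundary vertices retain a constant fraction of their $G$-degree in $H$. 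Either route should give $\mathrm{vol}_H(S_j) \ge \Omega(\mathrm{vol}_G(V))$; the large constant $25600$ in the statement appears designed to absorb the multiplicative losses from the Rayleigh, Chebyshev, and volume-recovery steps.
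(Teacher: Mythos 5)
Your lower bound is correct and essentially the same argument as the paper's Lemma \ref{lem:lower}, just packaged differently. You use the variational/harmonic-extension characterization of the Schur complement cut weight together with the Rayleigh quotient and the maximum principle, then apply the monotonicity $\text{vol}_I(S_j)\le\text{vol}_G(S_j)$ (Proposition \ref{prop:vol-mon}); the paper instead plugs the test vector $z := P^{1/2}\bigl(\tfrac{\mathbf{1}_A}{\text{vol}_G(A)}-\tfrac{\mathbf{1}_B}{\text{vol}_G(B)}\bigr)$ into the Rayleigh quotient for $P^{-1/2}L_I P^{-1/2}$ and uses Remark \ref{rmk:schur} to pass to $\lambda_G$. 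These are two renderings of the same computation.

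The upper bound sketch, however, has a genuine gap, and it is not the obstacle you flag. Your plan is to find a \emph{single} pair of thresholds $t_1 > t_2$ with $\text{vol}_G(S_j) = \Omega(\text{vol}_G(V))$ on both sides and $(t_1-t_2)^2 = \Omega(1/\text{vol}_G(V))$, citing a ``Chebyshev/Paley--Zygmund-type argument.'' That step is false. The constraints $f^T D_G f = 1$, $f\perp_{D_G}\mathbf{1}$ do not force constant $\pi$-mass on both sides of a separated pair of thresholds. For example, a vector equal to $M = \sqrt{(1-p)/(p\,\text{vol}_G(V))}$ on a set of $\pi$-mass $p$ and $-\delta = -\sqrt{p/((1-p)\text{vol}_G(V))}$ elsewhere satisfies both normalizations, and as $p\to 0$ any $t_1 > t_2$ with constant mass on both sides must satisfy $t_1 \le -\delta \le t_2$, a contradiction. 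This lopsided situation corresponds precisely to the case where the optimal Schur cut has one side with small $G$-volume, and the $\min(\text{vol}_I(A),\text{vol}_I(B))$ normalization in $\rho_G$ is designed to allow the cut to shrink with that side; a fixed-threshold Chebyshev bound cannot track that tradeoff.

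The paper instead runs a full sweep over threshold pairs of the special form $(S_{\le q/2}, S_{\ge q})$ for all $q>0$ (and symmetrically for $q\le 0$), with an integration argument in the spirit of the classical Cheeger upper bound. Two ingredients that are absent from your sketch are structural: (i) the $q/2$-vs-$q$ choice controls how the eliminated regions for different $q$ overlap, so that the proxies $\widehat{c}^{I_q}$ can be compared across thresholds, and (ii) Proposition \ref{prop:kappa-int} gives the integral bound $\int_0^\infty (\kappa_q(a)-\kappa_q(b))^2/q\,dq \le 10(a-b)^2$ that plays the role Cauchy--Schwarz plays in the classical proof, but without the square root. The argument is then by contradiction: if \emph{every} $q$ fails one of the three conditions in $\SweepCut$, the integral identities force $\sum_v c_v^G y_v^2 < \sum_v c_v^G y_v^2$. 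The secondary obstacle you correctly identify (lower bounding $\text{vol}_I(S_j)$ in terms of $\text{vol}_G(S_j)$) is handled cleanly via the ``Large interior'' conditions $\phi_{S_j}^G\le 1/4$, which immediately give $\text{vol}_I(S_j)\ge \tfrac34\text{vol}_G(S_j)$, so neither of your proposed routes (a) or (b) is needed once the sweep is in place. Your route (a) gestures toward a co-area/pigeonhole idea that is in the right spirit, but as stated your plan does not provide the averaging over thresholds that makes the $O(\lambda_G)$ (rather than $O(\sqrt{\lambda_G})$) bound possible.
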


\subsection{Effective Resistance Clustering}

Our result directly implies a clustering result that relates $1/\lambda_G$ to effective resistances between sets of vertices in the graph $G$. Think of the weighted graph $G$ as an electrical network, where each edge represents a conductor with electrical conductance equal to its weight. For two sets of vertices $S_1$ and $S_2$, obtain a graph $H$ by contracting all vertices in $S_1$ to a single vertex $s_1$ and all vertices in $S_2$ to a single vertex $s_2$. Let $\texttt{Reff}_G(S_1,S_2)$ denote the effective resistance between the vertices $s_1$ and $s_2$ in the graph $H$. The following is a consequence of our main result:

\begin{theorem}\label{thm:res}
In any weighted graph $G$, there are two sets of vertices $S_1$ and $S_2$ for which $\texttt{Reff}_G(S_1,S_2)\ge 1/(25600\lambda_G\min(\text{vol}_G(S_1),\text{vol}_G(S_2)))$. Furthermore, for any pair of sets $S_1',S_2'$, $\texttt{Reff}_G(S_1',S_2')\le 2/(\lambda_G\min(\text{vol}_G(S_1'),\text{vol}_G(S_2'))$.
\end{theorem}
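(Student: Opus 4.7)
The plan is to deduce Theorem \ref{thm:res} from Theorem \ref{thm:main} by a single algebraic identity that converts the Schur-complement fractional conductance into an effective-resistance quantity. The two halves of the conclusion will then fall out from the two halves of Theorem \ref{thm:main} by rearrangement.

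The key step is to show that for any disjoint $S_1, S_2 \subseteq V(G)$, if $G'$ denotes the Schur complement of $L_G$ onto $T := S_1 \cup S_2$ and $w'(S_1, S_2)$ is the total weight in $G'$ of edges between $S_1$ and $S_2$, then
$$\texttt{Reff}_G(S_1, S_2) \;=\; 1/w'(S_1, S_2).$$
I would prove this in two pieces. First, effective resistance between subsets of $T$ is invariant under Schur complementing onto $T$: using the Dirichlet characterization $\texttt{Reff}_G(S_1, S_2)^{-1} = \min\{\phi^T L_G \phi : \phi|_{S_1}=1,\ \phi|_{S_2}=0\}$, minimizing over the free coordinates in $V \setminus T$ amounts to harmonic extension, which is precisely the operation that Schur complementing implements algebraically. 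Second, in $G'$ the vertex set is exactly $S_1 \cup S_2$, so the boundary conditions force $\phi = \mathbbm{1}_{S_1}$ at every vertex; intra-$S_1$ and intra-$S_2$ edges contribute $0$ to the Dirichlet energy, leaving $\phi^T L_{G'} \phi = w'(S_1,S_2)$. Combining the two pieces yields the identity.

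With this identity in hand, I would rewrite the Schur-complement fractional conductance of the pair $(S_1, S_2)$, denoted $\rho_{S_1, S_2}$, as
$$\rho_{S_1, S_2} \;=\; \frac{w'(S_1, S_2)}{\min(\text{vol}_G(S_1), \text{vol}_G(S_2))} \;=\; \frac{1}{\texttt{Reff}_G(S_1, S_2)\,\min(\text{vol}_G(S_1), \text{vol}_G(S_2))}.$$
The lower bound $\rho_G \ge \lambda_G/2$ of Theorem \ref{thm:main}, applied to every disjoint pair, rearranges to the universal upper bound on $\texttt{Reff}_G(S_1',S_2')$ claimed in the theorem. The upper bound $\rho_G \le 25600\,\lambda_G$ supplies a specific pair $(S_1, S_2)$ with $\rho_{S_1,S_2} \le 25600\lambda_G$, which rearranges to the existential lower bound on $\texttt{Reff}_G(S_1,S_2)$. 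I expect the identity in the first step to be the only nontrivial ingredient; a minor caveat is that the argument requires $\rho_G$ to be normalized by $\min(\text{vol}_G(S_1),\text{vol}_G(S_2))$ rather than by Schur-complement degrees, which I am inferring from the statement of Theorem \ref{thm:res} and which Section \ref{sec:prelim} presumably makes official.
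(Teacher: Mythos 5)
Your proposal matches the paper's proof in its essentials: the key identity $\texttt{Reff}_G(S_1,S_2) = 1/c^I(S_1,S_2)$ (which the paper supplies as Theorem \ref{thm:schur-cond-1}, proved via the same harmonic-extension/Dirichlet-energy argument you sketch) and the rearrangement that converts the two Cheeger-type bounds into the two resistance bounds.

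One clarification worth flagging. The paper distinguishes two quantities: $\rho_{A,B}^G$ normalizes by $\min(\text{vol}_I(A),\text{vol}_I(B))$ where $I$ is the Schur complement, while $\sigma_{A,B}^G$ normalizes by $\min(\text{vol}_G(A),\text{vol}_G(B))$. Your ``$\rho_{S_1,S_2}$'' is really the paper's $\sigma_{S_1,S_2}^G$, and the identity you write is exactly the paper's $\texttt{Reff}_G(S_1,S_2)\min(\text{vol}_G(S_1),\text{vol}_G(S_2)) = 1/\sigma_{S_1,S_2}^G$. This matters because Theorem \ref{thm:main} as stated is about $\rho_G$, and since $\sigma_{A,B}^G \le \rho_{A,B}^G$ (Proposition \ref{prop:vol-mon}), the inequality $\rho_G \ge \lambda_G/2$ does not by itself give $\sigma_{S_1',S_2'}^G \ge \lambda_G/2$ for every pair, which is what the universal upper bound on $\texttt{Reff}$ requires. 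The paper sidesteps this by deducing Theorem \ref{thm:res} directly from Lemma \ref{lem:lower} ($\lambda_G \le 2\sigma_G$, stated for $\sigma$) and Lemma \ref{lem:upper} ($\rho_G \le 25600\lambda_G$, which gives $\sigma_G \le 25600\lambda_G$ via $\sigma \le \rho$), rather than from Theorem \ref{thm:main} verbatim. You correctly intuited the needed normalization; just be aware that you must cite the intermediate lemmas rather than the headline theorem to close the argument.
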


\cite{CRRST97} proved the upper bound present in this result when $|S_1'| = |S_2'| = 1$. We prove Theorem \ref{thm:res} in Appendix \ref{app:res}.

\subsection{Graph Partitioning}

Effective resistance in spectral graph theory has been used several times recently (for example \cite{MST15,AALG18}) to obtain improved graph partitioning results. $1/\lambda_G$ may not yield a good approximation to the effective resistance between pairs of vertices \cite{CRRST97}. For example, on an $n$-vertex grid graph $G$, all effective resistances are between $\Omega(1)$ and $O(\log n)$, but $\lambda_G = \Theta(1/n)$. Theorem \ref{thm:res} closes this gap by considering pairs of \emph{sets} of vertices, not just pairs of vertices.

Cheeger's inequality is the starting point for analysis of spectral partitioning. In some partitioning tasks, cutting the graph does not make sense. For example, spectral partitioning is an important tool in image segmentation \cite{SM00,MS00}. Graph partitioning makes the most sense in image segmentation when one wants to find an object with a sharp boundary. However, in many images, like the one in Figure \ref{fig:pic} on the right, objects may have fuzzy boundaries. In these cases, it is not clear which cut an image segmentation algorithm should return.

\begin{figure}
\begin{center}
\includegraphics[width=0.4\textwidth]{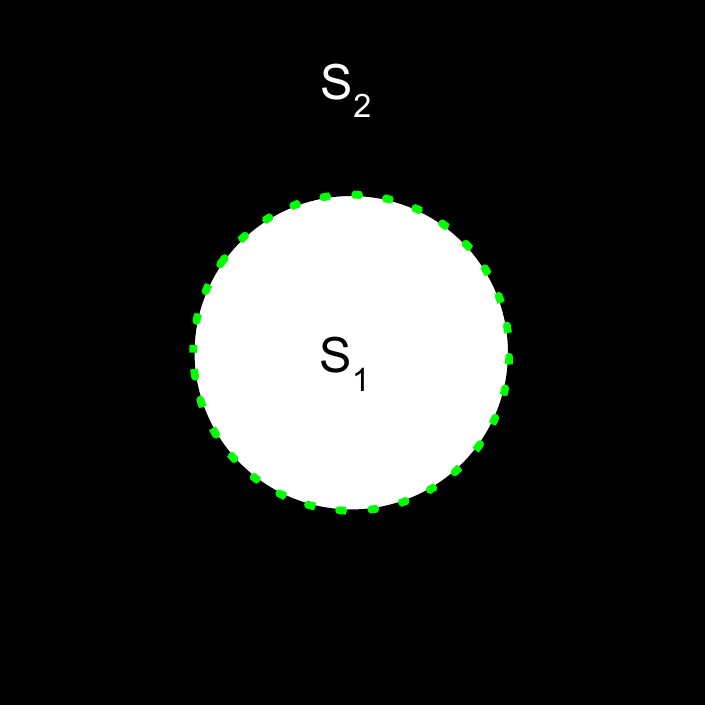}
\includegraphics[width=0.4\textwidth]{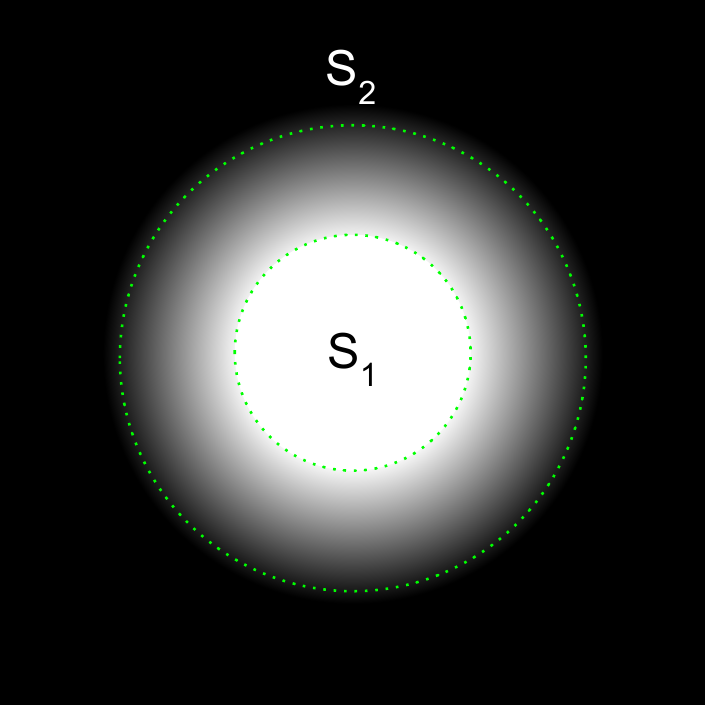}
\end{center}
\caption{Spectral partitioning finds the $S_1$-$S_2$ cut in the left image, but may not in the right due to the presence of many equal weight cuts. The minimum fractional conductance Schur complement cut is displayed in both images.}
\label{fig:pic}
\end{figure}

Considering cuts in Schur complements circumvents this ambiguity. Think of an image as a graph by making a vertex for each pixel and making an edge between adjacent pixels, where the weight on an edge is inversely related to the disparity between the colors of the endpoint pixels for the edge. An optimal segmentation in our setting would consist of the two sets $S_1$ and $S_2$ corresponding to pixels on either side of the fuzzy boundary. Computing the Schur complement of the graph onto $S_1\cup S_2$ eliminates all vertices corresponding to pixels in the boundary.

Some examples in which Cheeger's inequality is not tight illustrate a similar phenomenon in which there are many equally good cuts. For example, let $G$ be an unweighted $n$-vertex cycle. This is a tight example for the upper bound in Cheeger's inequality, as no cut has fractional conductance smaller than $O(1/n)$ despite the fact that $\lambda_G = \Theta(1/n^2)$. Instead, divide the cycle into four equal-sized quarters and let $S_1$ and $S_2$ be two opposing quarters. The Schur complement cut between $S_1$ and $S_2$ has fractional conductance at most $O(1/n^2)$, which matches $\lambda_G$ up to a constant factor.

\begin{figure}
\begin{center}
\includegraphics[width=0.4\textwidth]{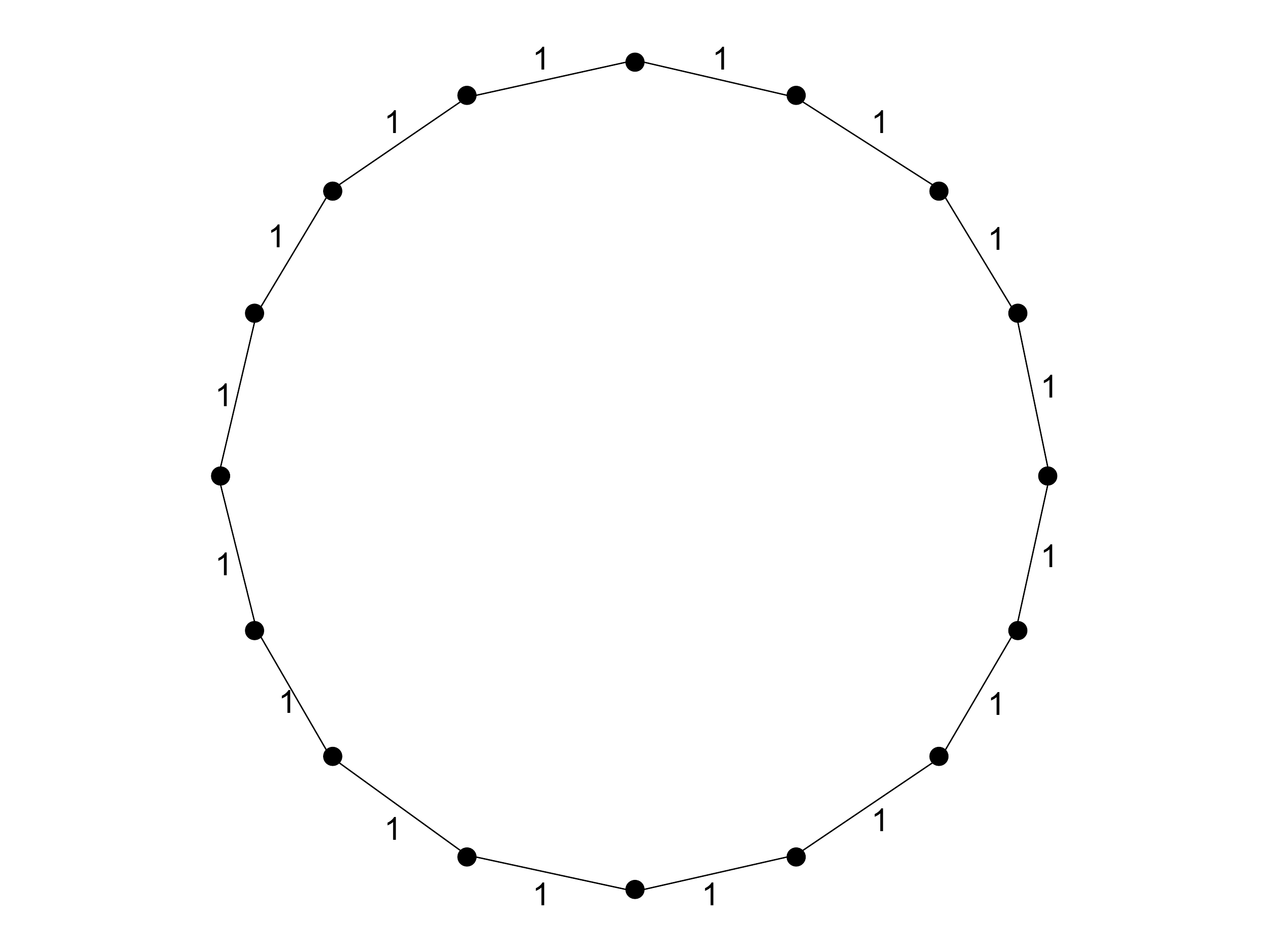}
\includegraphics[width=0.4\textwidth]{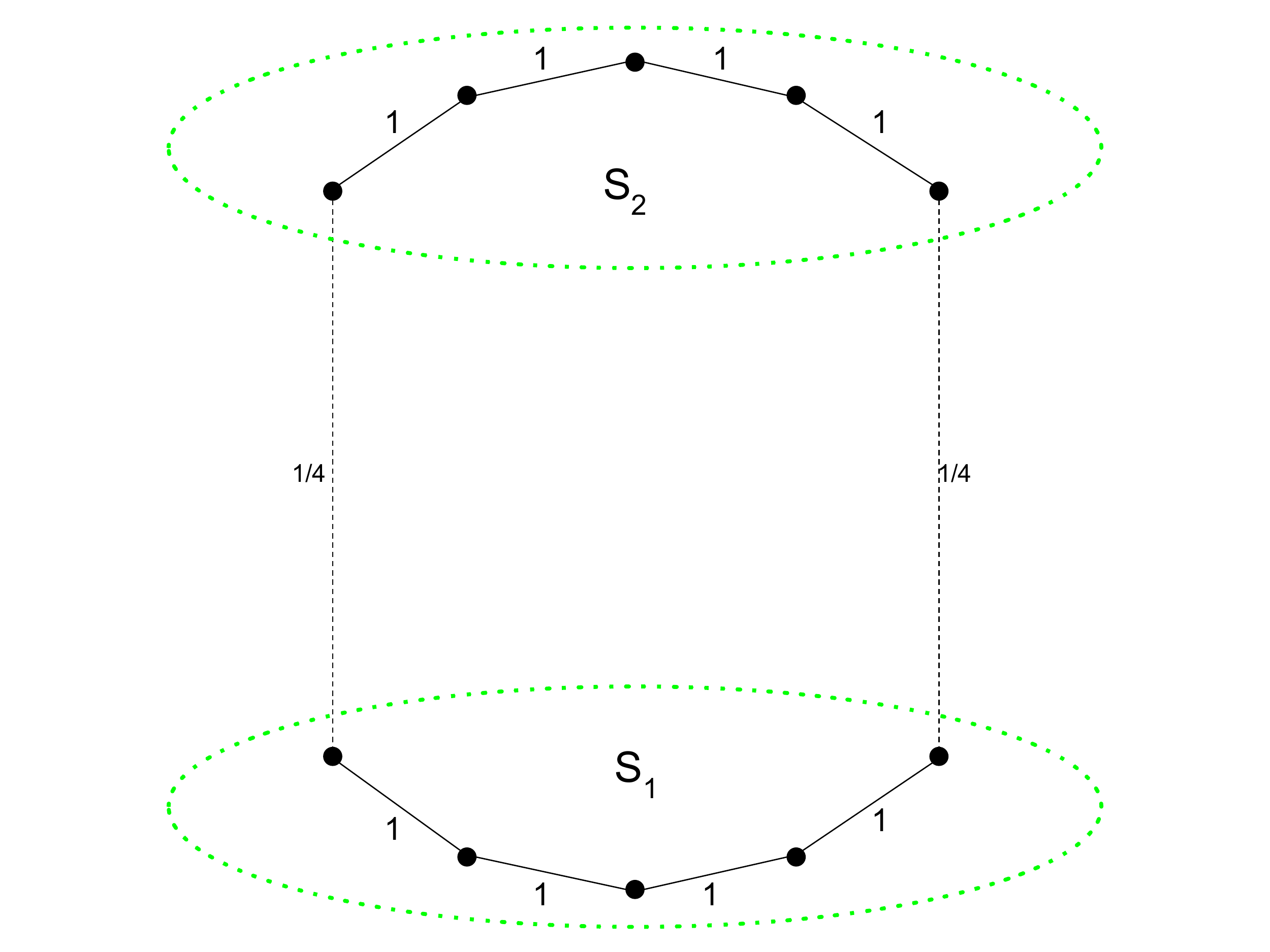}
\caption{A tight example for the upper bound in Cheeger's inequality. The minimum fractional conductance of any cut in this graph is $1/8$, while the fractional conductance of the illustrated Schur complement cut on the right is $2(1/4)/(2(1/4) + 8(1)) = 1/17 < 1/8$.}
\end{center}
\end{figure}

\section{Preliminaries}\label{sec:prelim}

\noindent \textbf{Graph theory}: Consider an undirected, connected graph $H$ with edge weights $\{c_e^H\}_{e\in E(H)}$, $m$ edges, and $n$ vertices. Let $V(H)$ and $E(H)$ denote the vertex and edge sets of $H$ respectively. For two sets of vertices $A,B\subseteq V(H)$, let $E_H(A,B)$ denote the set of edges in $H$ incident with one vertex in $A$ and one vertex in $B$ and let $c^H(A,B) := \sum_{e\in E_H(A,B)} c_e^H$. For a set of edges $F\subseteq E(H)$, let $c^H(F) := \sum_{e\in F} c_e^H$. For a set of vertices $A\subseteq V(H)$, let $\partial_H A := E_H(A,V(H)\setminus A)$. For a vertex $v\in V(H)$, let $\partial_H v := \partial_H \{v\}$ denote the edges incident with $v$ in $H$ and let $c_v^H := \sum_{e \in \partial_H v} c_e^H$. For a set of vertices $A\subseteq V(H)$, let $\text{vol}_H(A) := \sum_{v\in A} c_v^H$. When $A$ and $B$ are disjoint, let $H/(A,B)$ denote the graph with all vertices in $A$ identified to one vertex $a$ and all vertices in $B$ identified to one vertex $b$. Formally, let $H/(A,B)$ be the graph with $V(H/(A,B)) = (V(H)\setminus (A\cup B))\cup \{a,b\}$, embedding $f:V(H)\rightarrow V(H/(A,B))$ with $f(u) := a$ if $u\in A$, $f(u) := b$ if $u\in B$, and $f(u) := u$ otherwise, and edges $\{f(u),f(v)\}$ for all $\{u,v\}\in E(H)$. Let $H/A := H/(A,\emptyset)$.\\

\noindent \textbf{Laplacians}: Let $D_H$ be the $n\times n$ diagonal matrix with rows and columns indexed by vertices in $H$ and $D_H(v,v) = c_v^H$ for all $v\in V(H)$. Let $A_H$ be the adjacency matrix of $H$; that is the matrix with $A_H(u,v) = c_{uv}^H$ for all $u,v\in V(H)$. Let $L_H := D_H - A_H$ be the Laplacian matrix of $H$. Let $N_H := D_H^{-1/2} L_H D_H^{-1/2}$ denote the normalized Laplacian matrix of $H$. For a matrix $M$, let $M^{\dagger}$ denote the Moore-Penrose pseudoinverse of $M$. For subsets $A$ and $B$ of rows and columns of $M$ respectively, let $M[A,B]$ denote the $|A|\times |B|$ submatrix of $M$ restricted to those rows and columns. For a set of vertices $S\in V(H)$, let $\textbf{1}_S$ denote the indicator vector for the set $S$. For two vertices $u,v\in \mathbb{R}^n$, let $\chi_{uv} := \textbf{1}_{\{u\}} - \textbf{1}_{\{u\}}$. When the graph is clear from context, we omit $H$ from all of the subscripts and superscripts of $H$. For a vector $x\in \mathbb{R}^n$, let $x_S\in \mathbb{R}^S$ denote the restriction of $x$ to the coordinates in $S$.

Let $\lambda_H$ denote the smallest nonzero eigenvalue of $N_H$. Equivalently, $$\lambda_H := \min_{x\in \mathbb{R}^n : x^T D_H^{1/2} \textbf{1}_{V(H)} = 0} \frac{x^T N_H x}{x^T x}$$ For any set of vertices $X\subseteq V(H)$, let $$L_{\texttt{Schur}(H,X)} := L_H[X,X] - L_H[X,V(H)\setminus X] L_H[V(H)\setminus X, V(H)\setminus X]^{-1} L_H[V(H)\setminus X, X]$$ where brackets denote submatrices with the indexed rows and columns. The following fact applies specifically to Laplacian matrices:

\begin{remark}[Fact 2.3.6 of \cite{K17}]\label{rmk:schur-graph}
For any graph $H$ and any $X\subseteq V(H)$, $L_{\texttt{Schur}(H,X)}$ is the Laplacian matrix of an undirected graph.
\end{remark}

Let $\texttt{Schur}(H,X)$ denote the graph referred to in Remark \ref{rmk:schur-graph}. Schur complementation commutes with edge contraction and deletion and is associative:

\begin{theorem}[Lemma 4.1 of \cite{CDN89}, statement from \cite{K17}]\label{thm:com-schur}
Given $H$, $S\subseteq V(H)$, and any edge $e$ with both endpoints in $S$,

$$\texttt{Schur}(H\setminus e,S) = \texttt{Schur}(H,S)\setminus e$$
and, for any pair of vertices $x,y\in S$,

$$\texttt{Schur}(H/\{x,y\},S) = \texttt{Schur}(H,S)/\{x,y\}$$
\end{theorem}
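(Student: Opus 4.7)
The plan is to use the energy-minimization characterization of the Schur complement. For any connected $H$ and nonempty $S\subseteq V(H)$, writing $T:= V(H)\setminus S$, the submatrix $L_H[T,T]$ is invertible, and the standard block-matrix identity for Schur complements of PSD matrices gives
$$\phi_S^T L_{\texttt{Schur}(H,S)}\phi_S \;=\; \min_{\phi\in\mathbb{R}^{V(H)}:\; \phi_S\text{ fixed}}\phi^T L_H\phi \qquad (\star)$$
for every $\phi_S\in\mathbb{R}^S$, proved by completing the square in the off-diagonal block. Because a symmetric matrix is determined by its quadratic form, both halves of the theorem reduce to checking equality of quadratic forms on $\mathbb{R}^S$ (respectively $\mathbb{R}^{S'}$ in the contraction case).

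For the deletion statement, decompose $L_H = L_{H\setminus e} + c_e^H\,\chi_{uv}\chi_{uv}^T$ where $e=\{u,v\}$. Because $u,v\in S$, the scalar $c_e^H(\phi_u-\phi_v)^2$ depends only on $\phi_S$ and factors out of the minimization in $(\star)$. Hence
$$L_{\texttt{Schur}(H,S)} \;=\; L_{\texttt{Schur}(H\setminus e,\,S)} \;+\; c_e^H\,\chi_{uv}\chi_{uv}^T,$$
which is exactly $\texttt{Schur}(H,S)\setminus e = \texttt{Schur}(H\setminus e,S)$.

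For the contraction statement, let $z$ denote the vertex of $H/\{x,y\}$ obtained by identifying $x$ and $y$, and set $S' := (S\setminus\{x,y\})\cup\{z\}$. Identification is a linear change of coordinates: the lift $P:\mathbb{R}^{V(H/\{x,y\})}\to\mathbb{R}^{V(H)}$ that copies the $z$-coordinate into both $x$ and $y$ and preserves the others satisfies $L_{H/\{x,y\}}=P^T L_H P$, and the analogous lift $P_S$ on $S$ yields $L_{\texttt{Schur}(H,S)/\{x,y\}}=P_S^T L_{\texttt{Schur}(H,S)}P_S$. Applying $(\star)$ to the graph $H/\{x,y\}$ onto $S'$, then swapping the order of two nested minimizations — an outer one over $\phi_S$ subject to $\phi_x=\phi_y$ and an inner one over the extension of $\phi_S$ to $T$ — and finally applying $(\star)$ to $H$ onto $S$, one obtains for any $\psi\in\mathbb{R}^{S'}$
$$\psi^T L_{\texttt{Schur}(H/\{x,y\},\,S')}\,\psi \;=\; \min_{\phi:\; \phi_S\text{ matches }\psi,\; \phi_x=\phi_y}\phi^T L_H\phi \;=\; \psi^T L_{\texttt{Schur}(H,S)/\{x,y\}}\,\psi,$$
so the two Laplacians coincide.

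The main thing requiring care is the bookkeeping in the contraction step, specifically that the inner minimization over extensions in $(\star)$ commutes with the outer constraint $\phi_x=\phi_y$; this is immediate because $\{x,y\}\subseteq S$ is disjoint from the extension variables indexed by $T$, so the two constraints act on disjoint coordinate blocks. Everything else is a mechanical manipulation of PSD quadratic forms.
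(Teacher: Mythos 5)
The paper gives no proof of this theorem --- it cites Lemma 4.1 of \cite{CDN89} (statement taken from \cite{K17}) --- so there is no in-paper argument to compare against. Your proof via the variational characterization $(\star)$ (the Schur complement's quadratic form equals the partial minimization of $\phi^T L_H\phi$ over the eliminated block) is correct and is the standard way to establish both commutation facts. The deletion case is clean: since $e$ has both endpoints in $S$, the rank-one term $c_e^H\chi_{uv}\chi_{uv}^T$ depends only on the fixed block $\phi_S$, so it passes through $\min_{\phi_T}$ unchanged, giving $L_{\texttt{Schur}(H,S)} = L_{\texttt{Schur}(H\setminus e,S)} + c_e^H\chi_{uv}\chi_{uv}^T$. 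The contraction case is also correct, though the phrase ``swapping the order of two nested minimizations'' overstates what happens: once you write $L_{H/\{x,y\}}=P^TL_HP$ and unwind $(\star)$, the constraint $\phi_x=\phi_y$ lands entirely in the $S$-block, which is already pinned by $\psi$, so there is in fact only one minimization (over $\phi_T$) and no swap is performed. Your closing remark identifies the actual substance --- $\{x,y\}\subseteq S$ is disjoint from the $T$-coordinates being minimized --- so the argument stands.

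Two small points worth making explicit if you write this up carefully. First, $(\star)$ requires $L_H[T,T]$ to be invertible; this holds since $H$ is connected and $S\neq\emptyset$, and you should also note that $L_{H\setminus e}[T,T]$ remains invertible because deleting an edge with both endpoints in $S$ cannot sever any vertex of $T$ from $S$ (each resulting component still touches $S$). Second, the paper's statement writes $\texttt{Schur}(H/\{x,y\},S)$ even though the surviving vertex set is really $S' = (S\setminus\{x,y\})\cup\{z\}$; you correctly read this mild abuse of notation.
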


\begin{theorem}\label{thm:assoc-schur}
Given $H$ and two sets of vertices $X\subseteq Y\subseteq V(H)$, $\texttt{Schur}(\texttt{Schur}(H,Y),X) = \texttt{Schur}(H,X)$.
\end{theorem}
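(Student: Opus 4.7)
The plan is to prove the identity via the variational characterization of the Schur complement, which reduces associativity to the associativity of iterated minimization. Write the vertex set as the disjoint union $V(H) = X \sqcup W \sqcup Z$, where $W = Y\setminus X$ and $Z = V(H)\setminus Y$. The key fact I would invoke is that for a symmetric positive semi-definite matrix $M = \begin{pmatrix} A & B \\ B^\top & D \end{pmatrix}$ with $D$ invertible, the Schur complement $A - BD^{-1}B^\top$ is characterized by
$$x^\top (A - BD^{-1}B^\top) x \;=\; \min_{y}\, \begin{pmatrix}x\\y\end{pmatrix}^\top M \begin{pmatrix}x\\y\end{pmatrix},$$
with minimum attained at $y = -D^{-1}B^\top x$. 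This identity follows from completing the square in $y$.

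Applying this characterization to $L_H$ with the two-block decomposition $X \sqcup (W\cup Z)$ gives
$$x^\top L_{\texttt{Schur}(H,X)} x \;=\; \min_{w,z}\; F(x,w,z),$$
where $F(x,w,z) := (x,w,z)^\top L_H (x,w,z)$. Applying it instead to $L_H$ with decomposition $Y \sqcup Z$ gives $(x,w)^\top L_{\texttt{Schur}(H,Y)} (x,w) = \min_z F(x,w,z)$, and then applying it once more to $L_{\texttt{Schur}(H,Y)}$ with decomposition $X \sqcup W$ yields
$$x^\top L_{\texttt{Schur}(\texttt{Schur}(H,Y),X)} x \;=\; \min_{w} (x,w)^\top L_{\texttt{Schur}(H,Y)} (x,w) \;=\; \min_{w}\min_{z}\, F(x,w,z).$$
Since $\min_{w,z} = \min_w \min_z$, the two quadratic forms in $x$ agree on all of $\mathbb{R}^X$. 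Polarization then gives equality of the two symmetric matrices $L_{\texttt{Schur}(\texttt{Schur}(H,Y),X)}$ and $L_{\texttt{Schur}(H,X)}$, and because by Remark \ref{rmk:schur-graph} each Laplacian uniquely determines its graph, the two graphs are identical.

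The main obstacle is verifying the invertibility hypotheses required to apply the variational identity: namely that $L_H[V(H)\setminus X, V(H)\setminus X]$, $L_H[Z,Z]$, and $L_{\texttt{Schur}(H,Y)}[W,W]$ are all invertible. The first is assumed implicitly whenever $\texttt{Schur}(H,X)$ is defined, and $L_H[Z,Z]$ is then invertible as a principal submatrix of a positive definite matrix. For the third, Remark \ref{rmk:schur-graph} asserts that $L_{\texttt{Schur}(H,Y)}$ is itself a graph Laplacian, so invertibility of $L_{\texttt{Schur}(H,Y)}[W,W]$ reduces to the same standard fact about Laplacians of connected graphs with a nonempty vertex set removed that underlies well-definedness of $\texttt{Schur}(\cdot,\cdot)$ in the first place. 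No new ideas are needed beyond bookkeeping these conditions alongside the variational argument.
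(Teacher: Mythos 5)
The paper states Theorem \ref{thm:assoc-schur} without proof or citation, treating associativity of the Schur complement as a standard background fact, so there is no paper proof against which to compare. Your argument via the partial-minimization characterization of the Schur complement is correct and complete: the completing-the-square identity $x^\top (A - BD^{-1}B^\top) x = \min_y (x,y)^\top M (x,y)$ is the right tool, the reduction of associativity of $\texttt{Schur}$ to the identity $\min_{w,z} = \min_w \min_z$ is exactly the algebraic content of the theorem, polarization correctly upgrades agreement of quadratic forms to equality of symmetric matrices, and you correctly track the required invertibility of $L_H[V(H)\setminus X, V(H)\setminus X]$, of its principal submatrix $L_H[Z,Z]$, and of $L_{\texttt{Schur}(H,Y)}[W,W]$. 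These all follow, as you indicate, from the fact that for a connected $H$ the principal submatrix of $L_H$ on any proper nonempty vertex subset is positive definite, together with Remark \ref{rmk:schur-graph} and the preservation of connectivity under Schur complementation (which lets you apply the same positive-definiteness fact to $\texttt{Schur}(H,Y)$). The only thing I would add for completeness is a one-line disposal of the degenerate cases $X = \emptyset$ and $X = Y$, where one side of the variational identity would involve a minimum over an empty set of variables, but both sides are trivially equal there; your argument handles the generic case $\emptyset \ne X \subsetneq Y$ cleanly.
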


The following property follows from the definition of Schur complements:

\begin{remark}\label{rmk:schur}
Let $H$ be a graph and $S\subseteq V(H)$. Let $I := \texttt{Schur}(H,S)$. For any $x\in \mathbb{R}^{V(H)}$ that is supported on $S$ with $x^T \textbf{1}_{V(H)} = 0$,

$$x^T L_H^{\dagger} x = x_S^T L_I^{\dagger} x_S$$
\end{remark}

The weight of edges in this graph can be computed using the following folklore fact, which we prove for completeness:

\begin{theorem}\label{thm:schur-cond-1}
For two disjoint sets $C,D\subseteq V(H)$, let $I := \texttt{Schur}(H,C\cup D)$. Then $$c^I(C,D) = \frac{1}{\chi_{cd}^T L_{H/(C,D)}^{\dagger} \chi_{cd}}$$
\end{theorem}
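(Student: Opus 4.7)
The plan is to interpret both sides of the claimed identity as the weight of the single non-self-loop edge in a common two-vertex graph, obtained from $H$ in two different orders (first contract then Schur complement, versus first Schur complement then contract). Set $J := H/(C,D)$ and let $c,d$ denote the two vertices of $J$ to which $C$ and $D$ are contracted.

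First I would establish
$$I/(C,D) \;=\; \texttt{Schur}(J,\{c,d\})$$
by iterating Theorem \ref{thm:com-schur}. Pick any pair $\{x_1,x_2\}\subseteq C$; since $\{x_1,x_2\}\subseteq C\cup D$, the theorem gives $\texttt{Schur}(H,C\cup D)/\{x_1,x_2\} = \texttt{Schur}(H/\{x_1,x_2\},C\cup D)$, where on the right $C\cup D$ is identified with its image after the contraction. Repeating this until all of $C$ is merged to $c$ and all of $D$ is merged to $d$ produces the claimed equality of two-vertex graphs.

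Next I would identify the weight of the unique $cd$ edge in this common graph in two ways. Viewed as $I/(C,D)$: edges of $I$ with both endpoints in $C$ (resp.\ both in $D$) become self-loops at $c$ (resp.\ $d$) and contribute nothing to the Laplacian, while edges of $I$ with one endpoint in $C$ and one in $D$ become parallel $cd$ edges of total weight $c^I(C,D)$, so the edge weight is $c^I(C,D)$. Viewed as $\texttt{Schur}(J,\{c,d\})$: the unique edge has some weight $w$, and a direct $2\times 2$ pseudoinverse computation gives $\chi_{cd}^T L_{\texttt{Schur}(J,\{c,d\})}^\dagger \chi_{cd} = 1/w$. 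Applying Remark \ref{rmk:schur} with $x := \chi_{cd}$ (which is supported on $\{c,d\}$ and has zero sum) yields $\chi_{cd}^T L_J^\dagger \chi_{cd} = \chi_{cd}^T L_{\texttt{Schur}(J,\{c,d\})}^\dagger \chi_{cd} = 1/w$, so $w = 1/(\chi_{cd}^T L_{H/(C,D)}^\dagger \chi_{cd})$. Equating the two values of $w$ proves the theorem.

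No step poses a serious obstacle; the only bookkeeping to be careful about is (i) that self-loops created by the contraction $I/(C,D)$ do not contribute to its Laplacian, so only the $C$--$D$ edges of $I$ appear in the off-diagonal entry, and (ii) that Theorem \ref{thm:com-schur} can be iterated along an arbitrary sequence of pair contractions to merge entire vertex sets, which is justified because every pair contracted lies within the current Schur complement set.
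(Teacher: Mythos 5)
Your proposal is correct and follows essentially the same route as the paper's own proof: express $c^I(C,D)$ as the $cd$ edge weight in $I/(C,D)$, use Theorem~\ref{thm:com-schur} to identify this graph with $\texttt{Schur}(H/(C,D),\{c,d\})$, and then use Remark~\ref{rmk:schur} together with the explicit two-vertex Laplacian pseudoinverse to evaluate the weight. The only difference is that you spell out the iteration of Theorem~\ref{thm:com-schur} over pair contractions and the handling of self-loops explicitly, which the paper treats as immediate.
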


\begin{proof}
By definition, $c^I(C,D) = c^{I/(C,D)}(\{c\},\{d\})$. By Theorem \ref{thm:com-schur}, $I/(C,D) = \texttt{Schur}(H/(C,D),\{c,d\})$. By Remark \ref{rmk:schur}, $c^{\texttt{Schur}(H/(C,D),\{c,d\})}(\{c\},\{d\}) = \frac{1}{\chi_{cd}^T L_{H/(C,D)}^{\dagger} \chi_{cd}}$. Combining these equalities gives the desired result.
\end{proof}

We also use the following folklore fact about electrical flows, which we prove for the sake of completeness:

\begin{theorem}\label{thm:schur-cond-2}
For two vertices $s,t\in V(H)$, $$\chi_{st}^T L_H^{\dagger} \chi_{st} = \frac{1}{\min_{p\in \mathbb{R}^{V(H)}: p_s \le 0, p_t \ge 1} p^T L_H p}$$
\end{theorem}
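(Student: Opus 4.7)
The plan is to interpret both sides as quantities associated with an electrical flow on $H$. Let $R := \chi_{st}^T L_H^{\dagger} \chi_{st}$; the goal is to show that the right-hand minimum equals $1/R$. I would handle the $\le$ and $\ge$ directions separately: the first by exhibiting an explicit feasible $p$ coming from an electrical potential, and the second by a Cauchy--Schwarz argument in the positive semidefinite form induced by $L_H$.

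For the upper bound, set $u := L_H^{\dagger} \chi_{ts}$, the voltage vector for unit current from $s$ to $t$. Since $H$ is connected, the nullspace of $L_H$ is $\operatorname{span}(\textbf{1}_{V(H)})$, and $\chi_{ts}$ is orthogonal to $\textbf{1}_{V(H)}$, so $\chi_{ts}$ lies in the range of $L_H$ and $L_H u = \chi_{ts}$, $u^T L_H u = \chi_{ts}^T L_H^{\dagger} \chi_{ts} = R$. In particular, $u_t - u_s = \chi_{ts}^T u = R$. Shifting $u$ by a multiple of $\textbf{1}_{V(H)}$ preserves $u^T L_H u$, so I may assume $u_s = 0$ and $u_t = R$. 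Then $p := u/R$ satisfies $p_s = 0 \le 0$, $p_t = 1 \ge 1$, and $p^T L_H p = R/R^2 = 1/R$, giving the $\le$ direction.

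For the lower bound, let $p$ be any feasible vector and write
$$p_t - p_s \;=\; \chi_{ts}^T p \;=\; \chi_{ts}^T (L_H L_H^{\dagger}) p \;=\; (L_H^{\dagger} \chi_{ts})^T (L_H p),$$
using that $L_H L_H^{\dagger}$ is the orthogonal projection onto the range of $L_H$ and $\chi_{ts}$ already lies there. The Cauchy--Schwarz inequality for the PSD form $L_H$, namely $(x^T L_H y)^2 \le (x^T L_H x)(y^T L_H y)$ applied with $x := L_H^{\dagger} \chi_{ts}$ and $y := p$, then yields $(p_t - p_s)^2 \le R \cdot (p^T L_H p)$. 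Since feasibility forces $p_t - p_s \ge 1$, rearranging gives $p^T L_H p \ge 1/R$.

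The main subtlety is tracking the nullspace of $L_H$: one must verify that $\chi_{ts}$ lies in the range of $L_H$ (true by connectedness and $\chi_{ts}^T \textbf{1}_{V(H)} = 0$), which legitimizes the identity $L_H L_H^{\dagger} \chi_{ts} = \chi_{ts}$ used in both directions. Once that is in place, the construction of the feasible $p$ and the Cauchy--Schwarz step are routine, and equality is attained by the specific $p$ built above, confirming the claimed identity.
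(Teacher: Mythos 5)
Your proof is correct, but it takes a somewhat different route from the paper's. The paper first argues the identity for the equality-constrained problem ($p_s = 0$, $p_t = 1$) by a first-order optimality condition (the gradient of $p^T L_H p$ vanishes exactly when $p$ is an electrical potential, so the minimizer is a scaled, shifted copy of $L_H^\dagger\chi_{st}$), and then passes to the inequality constraints by appealing to the fact that the optimal potentials lie in $[0,1]$. You instead prove the two directions of the equality directly for the inequality-constrained problem: the $\le$ direction by exhibiting the same normalized potential as a feasible point, and the $\ge$ direction by the Cauchy--Schwarz inequality in the semi-inner-product $\langle x,y\rangle_{L_H} := x^T L_H y$, applied with $x = L_H^\dagger\chi_{ts}$ and $y = p$, using $L_H L_H^\dagger \chi_{ts} = \chi_{ts}$ to identify $\langle x, p\rangle_{L_H}$ with $p_t - p_s \ge 1$. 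Your Cauchy--Schwarz step absorbs the inequality constraints cleanly and avoids any appeal to optimality conditions or the maximum principle, making the lower-bound direction more self-contained; the paper's gradient argument is the more standard ``electrical flow'' presentation but leaves the passage from equality to inequality constraints somewhat terse. Both arguments ultimately rest on the same fact, that $\chi_{st}$ lies in the range of $L_H$ for connected $H$, which you correctly flag and verify.
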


\begin{proof}
We first show that
$$\chi_{st}^T L_H^{\dagger}\chi_{st} = \frac{1}{\min_{p\in \mathbb{R}^{V(H)}: p_s = 0, p_t = 1} p^TL_Hp}$$
Taking the gradient of the objective $p^TL_Hp$ shows that that the optimal $p$ are the potentials for an electrical flow with flow conservation at all vertices besides $s$ and $t$. Therefore, $p$ is proportional to $L_H^{\dagger}\chi_{st} + \gamma\textbf{1}$ for some $\gamma\in\mathbb{R}$. The constant of proportionality is $\chi_{st}^T L_H^{\dagger}\chi_{st}$ since the $s$-$t$ potential drop in $p$ is 1. Therefore,

\begin{align*}
\min_{p\in \mathbb{R}^{V(H)}: p_s = 0, p_t = 1} p^TL_Hp &= \left(\frac{L_H^{\dagger}\chi_{st}}{\chi_{st}^T L_H^{\dagger} \chi_{st}}\right)^TL_H\left(\frac{L_H^{\dagger}\chi_{st}}{\chi_{st}^T L_H^{\dagger} \chi_{st}}\right)\\
&= \frac{1}{\chi_{st}^T L_H^{\dagger} \chi_{st}}\\
\end{align*}

The desired result follows from the fact that in the optimal $p$, all potentials are between 0 and 1 inclusive.
\end{proof}

\noindent \textbf{Notions of fractional conductance}: For a set of vertices $A\subseteq V(H)$, let $$\phi_A^H := \frac{c^H(\partial_H(A))}{\min(\text{vol}_H(A),\text{vol}_H(V(H)\setminus A))}$$ be the \textit{fractional conductance} of $A$. Let $$\phi_H := \min_{A\subseteq V(H): A\ne \emptyset} \phi_A^H$$ be the \textit{fractional conductance} of $H$.

For two disjoint sets of vertices $A,B\subseteq V(H)$, let $I := \texttt{Schur}(H,A\cup B)$ and $$\rho_{A,B}
^H := \frac{c^I(A,B)}{\min(\text{vol}_I(A),\text{vol}_I(B))}$$ be the \textit{Schur complement fractional conductance} of the pair of sets $(A,B)$. Define the \textit{Schur complement fractional conductance} of the graph $H$ to be $$\rho_H := \min_{A,B\subseteq V(H) : A\cap B = \emptyset,A\ne \emptyset,B\ne \emptyset} \rho_{A,B}^H$$ It will be helpful to deal with the quantities $$\sigma_{A,B}^H := \frac{c^I(A,B)}{\min(\text{vol}_H(A),\text{vol}_H(B))}$$ and $$\sigma_H := \min_{A,B\subseteq V(H) : A\cap B = \emptyset,A\ne \emptyset,B\ne \emptyset} \sigma_{A,B}^H$$ as well, which we call the \textit{mixed fractional conductances} of $(A,B)$ and $H$ respectively.

The following will be useful in relating $\rho_{A,B}^H$ to $\sigma_{A,B}^H$:

\begin{proposition}\label{prop:vol-mon}
For any two sets $X\subseteq Y\subseteq V(H)$, let $I := \texttt{Schur}(H,Y)$. Then, $$\text{vol}_I(X)\le \text{vol}_H(X)$$
\end{proposition}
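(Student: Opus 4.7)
The plan is to strengthen the statement to a pointwise inequality: I will show that $c_v^I \le c_v^H$ for every single vertex $v \in Y$, and then sum over $v \in X$ to obtain the volume bound. The case $Y = \emptyset$ (which forces $X = \emptyset$) is trivial, so I may assume $Y$ is nonempty; if $V(H)\setminus Y = \emptyset$, then $I = H$ and there is nothing to prove, so I also assume $V(H)\setminus Y$ is nonempty.

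To prove the pointwise bound, I would partition $L_H$ into blocks along the split $V(H) = Y \sqcup (V(H)\setminus Y)$, writing
$$L_H = \begin{pmatrix} A & B \\ B^T & C \end{pmatrix}, \qquad L_I = A - BC^{-1}B^T,$$
where $A = L_H[Y,Y]$, $B = L_H[Y,V(H)\setminus Y]$, and $C = L_H[V(H)\setminus Y,V(H)\setminus Y]$. Since $I$ is itself a weighted graph (Remark \ref{rmk:schur-graph}), the degree $c_v^I$ equals the diagonal entry $L_I(v,v)$, and $A(v,v) = L_H(v,v) = c_v^H$, so
$$c_v^I = c_v^H - (BC^{-1}B^T)(v,v).$$
It therefore suffices to show that the diagonal entries of $BC^{-1}B^T$ are nonnegative, and for this it is enough to show that $BC^{-1}B^T$ is positive semidefinite.

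The matrix $BC^{-1}B^T$ is PSD as soon as $C^{-1}$ is, i.e., as soon as $C$ is positive definite. Here I would invoke the standard fact that for a connected graph $H$, any principal submatrix of $L_H$ obtained by removing at least one row and column is positive definite (since $L_H \succeq 0$ with one-dimensional kernel spanned by $\mathbf{1}$, and $\mathbf{1}$ is not supported on any strict subset). Because $H$ is connected (as assumed in the Preliminaries) and $Y$ is nonempty, $C$ fits this description, so $C \succ 0$. Summing $c_v^I \le c_v^H$ over $v \in X$ yields $\mathrm{vol}_I(X) \le \mathrm{vol}_H(X)$.

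There is no real obstacle in this argument; it is essentially a one-line consequence of reading the Schur complement formula off the block decomposition. The only alternative worth noting is an inductive proof that eliminates the vertices of $V(H)\setminus Y$ one at a time using Theorem \ref{thm:assoc-schur}, checking by direct calculation that eliminating a single vertex $w$ decreases the degree of each neighbor $u$ by $(c_{uw})^2/c_w \ge 0$; this approach works too but is strictly longer.
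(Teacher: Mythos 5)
Your argument is correct, and it takes a genuinely different route from the paper's. The paper first reduces to $|X|=1$ and then, using the associativity of Schur complements (Theorem \ref{thm:assoc-schur}), reduces to eliminating a single vertex, where the degree change can be computed explicitly: eliminating $v$ changes $c_u^H$ to $\sum_w c_{uw}^H + \tfrac{c_{uv}^H}{c_v^H}\sum_w c_{vw}^H \le c_u^H$. That is precisely the ``alternative'' inductive proof you sketch at the end. Your main argument instead reads the inequality directly off the block form $L_I = A - BC^{-1}B^T$: since $L_I$ is a Laplacian (Remark \ref{rmk:schur-graph}) its diagonal entries are the degrees $c_v^I$, and $C\succ 0$ (because $H$ is connected and $V(H)\setminus Y$ is a proper subset) forces the diagonal of $BC^{-1}B^T$ to be nonnegative. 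The linear-algebraic route is arguably cleaner and avoids any appeal to associativity or induction; the paper's one-vertex-at-a-time computation is more elementary and makes the ``lost weight'' $c_{uv}^H(1 - \tfrac{1}{c_v^H}\sum_w c_{vw}^H)$ combinatorially explicit, which matches the random-walk interpretation emphasized elsewhere in the paper. Both are valid and roughly the same length.
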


\begin{proof}
It suffices to show this result when $|X| = 1$ because $\text{vol}$ is a sum of volumes (degrees) of vertices in the set. Furthermore, by Theorem \ref{thm:assoc-schur}, it suffices to show the result when $|Y| = |V(H)| - 1$. Let $v$ be the unique vertex in $H$ outside of $Y$ and let $u$ be the unique vertex in $X$. Then, by definition of the Schur complement,

\begin{align*}
\text{vol}_I(X) &= c_u^I\\
&= \sum_{w\in V(I)} c_{uw}^I\\
&= \sum_{w\in V(I)} \left(c_{uw}^H + \frac{c_{uv}^Hc_{vw}^H}{c_v^H}\right)\\
&= \left(\sum_{w\in V(I)} c_{uw}^H\right) + \frac{c_{uv}^H}{c_v^H}\left(\sum_{w\in V(I)} c_{vw}^H\right)\\
&\le \left(\sum_{w\in V(I)} c_{uw}^H\right) + c_{uv}^H\\
&= c_u^H\\
&= \text{vol}_H(X)\\
\end{align*}
as desired.
\end{proof}

To prove the upper bound, we given an algorithm for constructing a low fractional conductance Schur complement cut. The following result is helpful for making this algorithm take near-linear time:

\begin{theorem}[Theorem 8.2 of \cite{V13}]\label{thm:e-vec}
Given a graph $H$, there is a $\tilde{O}(m)$-time algorithm that produces a vector $x \leftarrow \ApxFiedler(H) \in \mathbb{R}^{V(H)}$ with $x^T D_H^{1/2}\textbf{1}_{V(H)} = 0$ for which

$$x^T N_H x \le 2\lambda_H x^T x$$
\end{theorem}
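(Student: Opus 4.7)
The plan is to run inverse power iteration on $N_H$, using a nearly-linear-time Laplacian solver to apply $N_H^{\dagger}$ at each step. Since $N_H = D_H^{-1/2} L_H D_H^{-1/2}$, its kernel is spanned by $D_H^{1/2}\textbf{1}_{V(H)}$ and $N_H^{\dagger} = D_H^{1/2} L_H^{\dagger} D_H^{1/2}$; combined with a Spielman--Teng-style solver, this lets one apply $N_H^{\dagger}$ to any vector in $\tilde{O}(m)$ time up to inverse-polynomial relative error. The routine \ApxFiedler samples $x_0$ uniformly at random from the unit sphere in the hyperplane orthogonal to $D_H^{1/2}\textbf{1}_{V(H)}$, iterates $x_{j+1}\leftarrow N_H^{\dagger}x_j$ for $k = \tilde{O}(1)$ steps (re-projecting after each step to clean drift along the kernel direction), and returns $x_k$. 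The total running time is $k\cdot\tilde{O}(m) = \tilde{O}(m)$, and $x_k^T D_H^{1/2}\textbf{1}_{V(H)} = 0$ is preserved by construction.

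For the Rayleigh quotient bound, let $0 = \lambda_1 < \lambda_2 = \lambda_H \le \cdots \le \lambda_n \le 2$ be the eigenvalues of $N_H$ with orthonormal eigenvectors $v_1,\ldots,v_n$, and write $x_0 = \sum_{i\ge 2} c_i v_i$. Then $x_k = \sum_{i\ge 2} c_i \lambda_i^{-k} v_i$, and setting $a_i := c_i^2 \lambda_i^{-2k}$,
$$R_k := \frac{x_k^T N_H x_k}{x_k^T x_k} = \frac{\sum_{i\ge 2} a_i \lambda_i}{\sum_{i\ge 2} a_i}.$$
The inequality $R_k \le 2\lambda_H$ is equivalent to $\sum_{i\ge 2} a_i(\lambda_i - 2\lambda_H) \le 0$. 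Indices with $\lambda_i \le 2\lambda_H$ contribute non-positively, and the $i=2$ term alone contributes $-a_2\lambda_H$. Indices with $\lambda_i > 2\lambda_H$ contribute positively, but inverse power iteration damps them: $a_i \le c_i^2(2\lambda_H)^{-2k}$, so combined with $\lambda_i \le 2$ their total positive contribution is at most $2^{-2k+1}\lambda_H^{-2k}$. A standard anti-concentration argument gives $c_2^2 \ge 1/\mathrm{poly}(n)$ for the random start with high probability, so choosing $k$ of order $\log(1/(c_2^2\lambda_H)) = \tilde{O}(1)$ (treating $1/\lambda_H$ as at most a polynomial of the weight-ratio bound that is already hidden in $\tilde{O}$) makes the bad contribution dominated by $a_2\lambda_H$, giving $R_k \le 2\lambda_H$.

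The main obstacle is controlling the error from $k$ successive approximate Laplacian solves, which compound multiplicatively rather than additively. One must set the per-solve precision to inverse polynomial in $n$ (an $O(\log n)$ overhead per solve). A standard perturbation argument then shows that $x_k$ differs from its exact-arithmetic value by a $(1+o(1))$ factor, and hence so does $R_k$. This slack is absorbed by working with a separator $(1+\delta)\lambda_H$ in place of $2\lambda_H$ in the good/bad split above, for some small constant $\delta > 0$, after which the final bound rounds to $2\lambda_H$.
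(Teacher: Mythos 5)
This theorem is imported verbatim from Vishnoi's monograph \cite{V13} (Theorem 8.2); the paper supplies no proof of its own, so there is no in-paper argument to compare against. Your sketch does capture the standard approach underlying that reference: inverse power iteration on $N_H$ driven by a nearly-linear-time Laplacian solver, a random start to guarantee a non-negligible Fiedler component, and an eigendecomposition calculation showing that $O(\log(n/\lambda_H))$ iterations suppress the high-eigenvalue components enough to push the Rayleigh quotient below $2\lambda_H$.

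Two points deserve more care. First, the identity $N_H^{\dagger} = D_H^{1/2} L_H^{\dagger} D_H^{1/2}$ is not literally true unless $H$ is regular: $N_H \cdot (D_H^{1/2}L_H^{\dagger}D_H^{1/2}) = I - \tfrac{1}{n}D_H^{-1/2}\textbf{1}\textbf{1}^T D_H^{1/2}$, which is not symmetric, so the Moore--Penrose conditions fail; the right-hand side is only a pseudoinverse ``up to the kernel direction.'' Your post-step re-projection onto the hyperplane orthogonal to $D_H^{1/2}\textbf{1}_{V(H)}$ makes the map you actually use correct, but the claimed identity as stated is false. Second, the accumulated-error analysis you wave off as a ``standard perturbation argument'' is exactly the technical core of the proof in \cite{V13}: one must show that solving each linear system to inverse-polynomial relative accuracy preserves both the dominance of the Fiedler component and the quantitative damping of the bad components across $\Theta(\log n)$ compositions, and that the iteration count (which depends on $\log(1/\lambda_H)$, hence on the weight ratio) folds into the hidden polylogs of $\tilde O(m)$. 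Your separator-at-$(1+\delta)\lambda_H$ idea is the right shape for absorbing that slack, but turning it into a proof is where essentially all the remaining work lies.
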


\section{Lower bound}

We now show the first inequality in Theorem \ref{thm:main}, which follows from the following lemma by Proposition \ref{prop:vol-mon}, which implies that $\sigma_G\le \rho_G$.

\begin{lemma}\label{lem:lower}
$$\lambda_G\le 2\sigma_G$$
\end{lemma}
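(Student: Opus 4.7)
The plan is to produce, for any disjoint nonempty $A, B \subseteq V(G)$, a test vector whose Rayleigh quotient in the variational formula for $\lambda_G$ is at most $2\sigma_{A,B}^G$; taking the infimum over $(A,B)$ then yields the lemma. Writing $x = D_G^{1/2} y$ in the definition
$$\lambda_G = \min\{x^T N_G x / x^T x : x \ne 0,\ x^T D_G^{1/2}\mathbf{1} = 0\},$$
this reduces to exhibiting a $y$ with $y^T D_G \mathbf{1}_{V(G)} = 0$ and small $y^T L_G y / y^T D_G y$.

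My candidate is built from the harmonic extension. Let $S := A \cup B$, $I := \texttt{Schur}(G, S)$, and let $h \in \mathbb{R}^{V(G)}$ be the unique vector with $h|_A = 1$, $h|_B = 0$, and $(L_G h)_u = 0$ for all $u \notin S$. I would first invoke the block-elimination identity underlying the definition of $L_I$: $h$ minimizes $z^T L_G z$ over all $z$ agreeing with $h$ on $S$, with minimum value $h|_S^T L_I h|_S$. Since $L_I$ is itself a graph Laplacian (Remark \ref{rmk:schur-graph}) and the boundary data is constant on each of $A$ and $B$, only edges of $I$ crossing from $A$ to $B$ contribute, so
$$h^T L_G h = h|_S^T L_I h|_S = \sum_{\{u,v\} \in E(I)} c_{uv}^I (h_u - h_v)^2 = c^I(A,B).$$

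Next I would shift: set $y := h - \mu \mathbf{1}_{V(G)}$ where $\mu := (h^T D_G \mathbf{1}_{V(G)})/\text{vol}_G(V(G))$, so $y^T D_G \mathbf{1}_{V(G)} = 0$. Since $L_G \mathbf{1}_{V(G)} = 0$, the numerator is unchanged: $y^T L_G y = c^I(A,B)$. For the denominator, the key observation is that $\mu$ is by construction the minimizer in $c$ of $\sum_v c_v^G (h_v - c)^2$, so dropping all terms with $v \notin S$ and plugging in the known boundary values gives
$$y^T D_G y = \min_{c \in \mathbb{R}} \sum_{v \in V(G)} c_v^G (h_v - c)^2 \ge \min_{c \in \mathbb{R}}\bigl[\text{vol}_G(A)(1-c)^2 + \text{vol}_G(B)\, c^2\bigr] = \frac{\text{vol}_G(A)\,\text{vol}_G(B)}{\text{vol}_G(A)+\text{vol}_G(B)} \ge \frac{\min(\text{vol}_G(A), \text{vol}_G(B))}{2}.$$
Combining the two bounds gives a Rayleigh quotient of at most $2c^I(A,B)/\min(\text{vol}_G(A), \text{vol}_G(B)) = 2\sigma_{A,B}^G$, as required.

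The only place Schur complementation enters essentially is the energy identity $h^T L_G h = c^I(A,B)$; everything else is a routine Rayleigh quotient manipulation with the twist that the harmonic boundary data pushes the energy exactly onto the Schur complement. Accordingly, the only delicate point is stating the harmonic-extension identity cleanly, which is standard and already implicit in the definition of $L_I$ in Section \ref{sec:prelim}.
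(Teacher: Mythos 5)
Your proof is correct, and it takes a noticeably more direct route than the paper's. The paper's proof works in the Schur complement graph $I$: it first establishes that the smallest nonzero eigenvalue of $P^{-1/2}L_I P^{-1/2}$ (where $P$ is the $G$-degree matrix restricted to $A\cup B$) is at least $\lambda_G$ via the pseudo-inverse comparison in Remark~\ref{rmk:schur}, and then plugs the signed indicator test vector $\frac{\mathbf{1}_A}{\mathrm{vol}_G(A)}-\frac{\mathbf{1}_B}{\mathrm{vol}_G(B)}$ into that smaller eigenvalue problem. You instead construct a test vector directly for $\lambda_G$ itself by harmonically extending the $\{0,1\}$-indicator of $A$ from $A\cup B$ to all of $G$, using the energy-minimization characterization $h^T L_G h = h|_S^T L_I h|_S = c^I(A,B)$, and then recentering to satisfy the orthogonality constraint. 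The two arguments are close to dual to each other: the paper's eigenvalue comparison and your energy identity are the two standard faces of the Schur complement, and both ultimately reduce the denominator bound to the same one-dimensional minimization $\min_c[\mathrm{vol}_G(A)(1-c)^2+\mathrm{vol}_G(B)c^2] = \frac{\mathrm{vol}_G(A)\,\mathrm{vol}_G(B)}{\mathrm{vol}_G(A)+\mathrm{vol}_G(B)}\ge\frac{1}{2}\min(\mathrm{vol}_G(A),\mathrm{vol}_G(B))$. What your version buys: you avoid passing through the intermediate eigenvalue problem and the pseudo-inverse machinery entirely, keeping everything as a single Rayleigh-quotient computation in $G$, at the small cost of invoking the (standard, and implicit in the paper's definition of $L_I$) harmonic-extension/energy-minimization identity rather than the pseudo-inverse identity of Remark~\ref{rmk:schur}.
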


\begin{proof}
We lower bound the Schur complement fractional conductance of any pair of disjoint sets $A,B\subseteq V(G)$. Let $I := \texttt{Schur}(G,A\cup B)$. Let $P$ be the $(A\cup B) \times (A\cup B)$ diagonal matrix with $P(u,u) = c_u^G$ for each $u\in A\cup B$. We start by lower bounding the minimum nonzero eigenvalue $\lambda$ of the matrix $P^{-1/2} L_I P^{-1/2}$. Let $\lambda_{\max}(M)$ denote the maximum eigenvalue of a symmetric matrix $M$. By definition of the Moore-Penrose pseudoinverse, $$1/\lambda = \lambda_{\max}(P^{1/2} L_I^{\dagger} P^{1/2})$$ By Remark \ref{rmk:schur}, $$\lambda_{\max}(P^{1/2} L_I^{\dagger} P^{1/2}) \le \lambda_{\max}(N_G^{\dagger}) = 1/\lambda_G$$ Therefore, $\lambda \ge \lambda_G$. We now plug in a test vector. Let $$z := P^{1/2}\left(\frac{\textbf{1}_A}{\text{vol}_G(A)} - \frac{\textbf{1}_B}{\text{vol}_G(B)}\right)$$ $z^T (P^{1/2}\textbf{1}_{V(I)}) = 0$, so

\begin{align*}
\lambda_G &\le \lambda\\
&= \min_{x\in \mathbb{R}^{A\cup B}: x^T P^{1/2} \textbf{1}_{V(I)} = 0} \frac{x^T (P^{-1/2} L_I P^{-1/2}) x}{x^T x}\\
&\le \frac{z^T (P^{-1/2} L_I P^{-1/2}) z}{z^T z}\\
&= \frac{c^I(A,B)\left((1/\text{vol}_G(A)) + (1/\text{vol}_G(B))\right)^2}{(\text{vol}_G(A)/\text{vol}_G(A)^2) + (\text{vol}_G(B)/\text{vol}_G(B)^2)}\\
&= \frac{c^I(A,B)\text{vol}_G(A\cup B)}{\text{vol}_G(A)\text{vol}_G(B)}\\
&\le 2\sigma_{A,B}^G\\
\end{align*}
\end{proof}

\section{Upper bound}

We now show the second inequality in Theorem \ref{thm:main}:

\begin{lemma}\label{lem:upper}
$$\rho_G\le 25600\lambda_G$$
\end{lemma}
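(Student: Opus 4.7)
The plan is to use the approximate Fiedler vector from Theorem \ref{thm:e-vec} to produce the sets $A$ and $B$ as level sets of that vector, and then bound the resulting Schur complement cut analytically. If $\lambda_G \ge 1/25600$ the statement is trivial because $\rho_{A,B}^G \le 1$ for any valid $A, B$, so assume $\lambda_G$ is small. I would call $\ApxFiedler(G)$, obtain $x$, set $y := D_G^{-1/2} x$ so that $y^T D_G \textbf{1}_{V(G)} = 0$ and $y^T L_G y \le 2\lambda_G\, y^T D_G y$, and normalize so $y^T D_G y = 1$. For two thresholds $\alpha, \beta > 0$ to be chosen, define
$$A := \{v : y(v) \ge \alpha\}, \quad B := \{v : y(v) \le -\beta\}, \quad I := \texttt{Schur}(G, A \cup B).$$

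For the upper bound on $c^I(A,B)$ I would use the test potential $p(v) := \min(1, \max(0, (\alpha - y(v))/(\alpha+\beta)))$, which is constantly $0$ on $A$, constantly $1$ on $B$, and $1/(\alpha+\beta)$-Lipschitz in $y$. Lifting a potential on $G/(A,B)$ to one on $G$ that is constant on $A$ and $B$ preserves the Laplacian energy, so combining Theorems \ref{thm:schur-cond-1} and \ref{thm:schur-cond-2} gives
$$c^I(A,B) \le p^T L_G p \le \frac{y^T L_G y}{(\alpha+\beta)^2} \le \frac{2\lambda_G}{(\alpha+\beta)^2}.$$

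For the lower bound on $\text{vol}_I(A)$ (and analogously for $B$), I would exploit that Schur complementation only increases the weights of edges among kept vertices. Writing $L_I = L_G[A\cup B] - M\, L_G[V(G)\setminus(A\cup B)]^{-1} M^T$ with $M := L_G[A\cup B,\, V(G)\setminus(A\cup B)]$ entrywise non-positive and $L_G[V(G)\setminus(A\cup B)]^{-1}$ entrywise non-negative (as the inverse of a nonsingular $M$-matrix), the subtracted term has non-negative off-diagonal entries, so $c_{uv}^I \ge c_{uv}^G$ for every $u \ne v$ in $A \cup B$. Summing degrees, $\text{vol}_I(A) \ge \text{vol}_G(A) - c^G(A, V(G) \setminus (A\cup B)) \ge \text{vol}_G(A) - c^G(\partial_G A)$. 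Hence if $A$ is the smaller side of the cut $(A, V(G)\setminus A)$ and $\phi_A^G \le 1/2$, then $\text{vol}_I(A) \ge \text{vol}_G(A)/2$, and likewise for $B$. Combining with the upper bound,
$$\rho_{A,B}^G \le \frac{4\lambda_G}{(\alpha+\beta)^2 \cdot \min(\text{vol}_G(A), \text{vol}_G(B))}.$$

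It remains to select thresholds satisfying (i) $\text{vol}_G(A), \text{vol}_G(B) \le \text{vol}_G(V(G))/2$, (ii) $\phi_A^G, \phi_B^G \le 1/2$, and (iii) $(\alpha+\beta)^2 \min(\text{vol}_G(A), \text{vol}_G(B)) = \Omega(1)$. Conditions (i) and (ii) should follow from the standard Cheeger sweep applied separately to the positive part $y^+$ and negative part $y^-$, using the standard fact $(y^{\pm})^T L_G y^{\pm} \le y^T L_G y \le 2\lambda_G$ to obtain sweep thresholds with $\phi \le O(\sqrt{\lambda_G}) \le 1/2$. The main obstacle I anticipate is guaranteeing (iii) simultaneously: the Cheeger sweep picks the threshold that minimizes $\phi$, paying no attention to the size of $\alpha$, while the target bound on $\rho$ is proportional to $1/((\alpha+\beta)^2 \text{vol}_G(A))$. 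The resolution I expect is a two-dimensional pigeonhole/averaging argument, looking at dyadic scales of the $D$-mass of $y^\pm$, that produces a pair $(\alpha, \beta)$ good on both axes at once; the large constant $25600 = 160^2$ presumably absorbs the losses from this averaging together with the $2$ in the Fiedler approximation and the $1/2$ in the volume bound.
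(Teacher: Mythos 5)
Your setup is sound as far as it goes: the reduction to the case $\lambda_G < 1/25600$, the use of Theorems~\ref{thm:schur-cond-1} and~\ref{thm:schur-cond-2} to bound $c^I(A,B)$ by the energy of any clipped test potential, and the entrywise argument that $c_{uv}^I \ge c_{uv}^G$ (hence $\text{vol}_I(A)\ge\text{vol}_G(A)-c^G(\partial_G A)$, giving the ``large interior'' conversion from $\sigma$ to $\rho$) all match pieces that appear in the paper, in the proofs of Proposition~\ref{prop:proxy-bound} and of Lemma~\ref{lem:upper} given Lemma~\ref{lem:upper-main}. But the entire difficulty of the theorem lives in the part you explicitly defer, namely choosing the threshold(s) so that both the test-potential energy and the volume are simultaneously controlled, and the route you sketch there has a genuine gap. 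A one-parameter sweep over $\alpha$ for the positive tail does not give $\alpha^2\,\text{vol}_G(A_\alpha)=\Omega\bigl(\sum_v c_v^G (y_v^+)^2\bigr)$ with a universal constant: from $\int_0^\infty \text{vol}_G(A_\alpha)\,\alpha\,d\alpha = \tfrac12\sum c_v(y_v^+)^2$ the best you can extract without further structure is a bound losing a factor of $\log(\alpha_{\max}/\alpha_{\min})$, which is unbounded over the family of graphs. Passing to ``dyadic scales of the $D$-mass'' and a two-dimensional pigeonhole over $(\alpha,\beta)$ does not obviously repair this, and you give no argument that it does.

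The paper avoids this with a different construction and a different averaging identity, and I think you need to internalize both. First, for $q>0$ it does \emph{not} take $A$ and $B$ on opposite sides of $0$: it takes $A=S_{\ge q}$ and $B=S_{\le q/2}$, so $B$ automatically contains at least half the volume and the ``min'' is always $\text{vol}_G(S_{\ge q})$; only a thin band $(q/2,q)$ is eliminated, which is what makes the ``large interior'' side conditions ($c^G(\partial S_{\ge q/2})\le\tfrac14\text{vol}_G(S_{\ge q})$ and $\phi_{S_{\ge q}}\le\tfrac14$) achievable. Second, the test potential is also clipped in $q$: it is $\kappa_q(y)=\min(q,\max(q/2,y))$, not your fixed linear ramp, and the proxy $\widehat{c}^{I_q}(S_{\ge q},S_{\le q/2})=\tfrac{4}{q^2}\sum_e c_e(\kappa_q(y_u)-\kappa_q(y_v))^2$ is the quantity actually integrated. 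The punchline is Proposition~\ref{prop:kappa-int}, $\int_0^\infty \tfrac{(\kappa_q(a)-\kappa_q(b))^2}{q}\,dq\le 10(a-b)^2$, which is the scale-free integral identity that replaces your logarithmic pigeonhole and yields a universal constant. The paper then runs a contradiction argument: if no $q$ satisfies the if-condition in $\SweepCut$, integrating the assumed inequality against $q\,dq$, using the identity $2\int_0^\infty\text{vol}_G(S_{\ge q})\,q\,dq=\sum_{v\in S_{\ge 0}}c_v y_v^2$, Proposition~\ref{prop:kappa-int}, and a Cheeger-style Cauchy--Schwarz bound on $\sum_e c_e|y_u^2-y_v^2|$, gives $\sum_{v\in S_{\ge 0}}c_v y_v^2<\tfrac12\sum_v c_v y_v^2$, and likewise for the negative side, a contradiction. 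None of this machinery is present in your proposal, so the argument as written does not go through.
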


To prove this lemma, we need to find a pair of sets $A$ and $B$ with low Schur complement fractional conductance:

\begin{lemma}\label{lem:upper-main}
There is a near-linear time algorithm $\SweepCut(G)$ that takes in a graph $G$ with $\lambda_G\le 1/25600$ and outputs a pair of nonempty sets $A$ and $B$ with the following properties:

\begin{itemize}
\item (Low Schur complement fractional conductance) $\sigma_{A,B}^G\le 640\lambda_G$
\item (Large interior) $\phi_A^G \le 1/4$ and $\phi_B^G \le 1/4$
\end{itemize}
\end{lemma}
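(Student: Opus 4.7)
The plan is to run a two-sided Cheeger sweep on the approximate Fiedler vector. First, call $x \leftarrow \ApxFiedler(G)$ and set $y := D_G^{-1/2} x$, so that $y^T D_G \textbf{1}_{V(G)} = 0$ and $y^T L_G y \le 2\lambda_G \cdot y^T D_G y$. Sort the vertices of $G$ in nondecreasing order of $y$-value; the algorithm will return
\[A := \{v : y_v \le \tau_1\}, \qquad B := \{v : y_v \ge \tau_2\}\]
for two thresholds $\tau_1 < \tau_2$ to be specified. Once \ApxFiedler\ has been called (Theorem~\ref{thm:e-vec}), only a single sort and a linear scan over thresholds remain, giving near-linear running time.

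Given any such $\tau_1 < \tau_2$, bound $\sigma_{A,B}^G$ using the truncated, rescaled potential
\[p_v := \min\!\left(1,\, \max\!\left(0,\, \tfrac{y_v - \tau_1}{\tau_2 - \tau_1}\right)\right),\]
which equals $0$ on $A$ and $1$ on $B$. Because $p$ is constant on each of $A$ and $B$, the Dirichlet energy is preserved under contraction, so $p^T L_G p = p^T L_{G/(A,B)} p$. The clipping map is Lipschitz with constant $1/(\tau_2 - \tau_1)$, hence $(p_u - p_v)^2 \le (y_u - y_v)^2/(\tau_2 - \tau_1)^2$ edgewise, and therefore
\[p^T L_G p \;\le\; \frac{y^T L_G y}{(\tau_2 - \tau_1)^2} \;\le\; \frac{2\lambda_G \cdot y^T D_G y}{(\tau_2 - \tau_1)^2}.\]
With $I := \texttt{Schur}(G, A\cup B)$, Theorems~\ref{thm:schur-cond-1} and~\ref{thm:schur-cond-2} together express $c^I(A,B)$ as the minimum of $q^T L_{G/(A,B)} q$ over $q$ with $q_a \le 0$, $q_b \ge 1$, where $a,b$ are the contracted vertices; our $p$ is feasible, so
\[\sigma_{A,B}^G \;\le\; \frac{2\lambda_G \cdot y^T D_G y}{(\tau_2 - \tau_1)^2 \min(\text{vol}_G(A), \text{vol}_G(B))}.\]
The target $\sigma_{A,B}^G \le 640\lambda_G$ therefore reduces to exhibiting $\tau_1 < \tau_2$ satisfying
\[(\tau_2 - \tau_1)^2 \min(\text{vol}_G(A), \text{vol}_G(B)) \;\ge\; (y^T D_G y)/320,\]
while also ensuring the interior conditions $\phi_A^G,\phi_B^G \le 1/4$.

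The hard part I expect is producing $\tau_1$ and $\tau_2$ that simultaneously satisfy the interior and the product-inequality conditions above. The approach is a two-sided Cheeger sweep: shift $y$ by the constant equal to its $D_G$-weighted median (which preserves $y^T L_G y$ and can only increase $y^T D_G y$), split $y = y_+ - y_-$ into non-negative parts each supported on volume at most $\text{vol}_G(V(G))/2$, and run the classical one-sided Cheeger sweep on each. Since $\lambda_G \le 1/25600$ gives $\sqrt{2\lambda_G} \le 1/113 \ll 1/4$, each one-sided sweep produces an entire interval of thresholds whose sweep cuts have fractional conductance well below $1/4$, leaving ample flexibility in choosing $\tau_1$ from the negative side and $\tau_2$ from the positive side. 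The remaining task is to combine these two intervals into a pair that also achieves the displayed product inequality. My intended route is an averaging/co-area argument of the standard Cheeger flavor applied simultaneously to $y_+$ and $y_-$: if every admissible pair $(\tau_1,\tau_2)$ violated the product inequality, then too much of $\sum_v c_v y_v^2$ would have to concentrate either in the middle strip $\{\tau_1 < y_v < \tau_2\}$ or in extremely thin tails of $y$, which cannot be reconciled with $y^T L_G y \le 2\lambda_G \cdot y^T D_G y$ together with the one-sided sweep estimates. Tracking the constants carefully through this calculation is expected to yield the stated factor of $640$.
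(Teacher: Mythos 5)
Your certification step is correct and matches the paper's Proposition~\ref{prop:proxy-bound}: a clipped, rescaled copy of $y$ is feasible for the optimization coming from Theorems~\ref{thm:schur-cond-1} and~\ref{thm:schur-cond-2}, and its Dirichlet energy upper-bounds $c^I(A,B)$. However, the existence argument---which is the actual content of the lemma---is only a sketch, and the sketch as written does not close. You leave $\tau_1$ and $\tau_2$ as two independent parameters and then appeal to a vague averaging claim (``too much mass in the middle strip or in thin tails''). The paper's crucial move, which your proposal lacks, is to couple the two thresholds into a one-parameter family: it takes $\tau_1 = q/2$ and $\tau_2 = q$ for $q > 0$ (and the mirror image for $q \le 0$). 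This lets the argument be written as a single co-area integral over $q$, and the whole thing hinges on Proposition~\ref{prop:kappa-int}, namely
\[
\int_0^\infty \frac{(\kappa_q(a) - \kappa_q(b))^2}{q}\,dq \le 10(a-b)^2
\quad\text{for } \kappa_q(y) = \min(q,\max(q/2,y)),
\]
which is exactly the inequality needed to integrate the clipped-energy proxy against the tail volume $\text{vol}_G(S_{\ge q})$ and derive a contradiction from $y^T L_G y \le 2\lambda_G\, y^T D_G y$. Without the $\tau_1 = \tau_2/2$ coupling there is no single scalar to integrate over, and your heuristic does not have an obvious way to be made precise; in fact the paper explicitly flags that the choice $\le q/2$ versus $\ge q$ (as opposed to, say, $\le 0$ versus $\ge q$) is a key design decision controlling the overlap between Schur complements at different thresholds. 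The coupling is also what makes the algorithm near-linear: a one-dimensional sweep has $O(n)$ breakpoints, whereas a genuinely two-parameter search over $(\tau_1,\tau_2)$ would naively be quadratic.

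A secondary gap is the interior condition. You claim each one-sided Cheeger sweep produces ``an entire interval of thresholds'' with conductance well below $1/4$; the standard sweep argument guarantees only one good threshold, and even if intervals existed, the threshold(s) where the Schur-complement proxy bound succeeds need not coincide with those where the ordinary cut is thin. In the paper these constraints are handled jointly: the sweep requires all three conditions (proxy bound, $c^G(\partial S_{\ge q/2}) \le \tfrac14\text{vol}_G(S_{\ge q})$, and $\phi_{S_{\ge q}} \le \tfrac14$) to hold at the same $q$, and the contradiction argument accounts for the failure of any of them by adding the boundary terms $c^G(\partial S_{\ge q})$ and $c^G(\partial S_{\le q/2})$ into the volume upper bound and integrating them against $q$ alongside the proxy term. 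You would need an analogous coupling of all three failure modes into one integral to finish.
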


We now prove Lemma \ref{lem:upper} given Lemma \ref{lem:upper-main}:

\begin{proof}[Proof of Lemma \ref{lem:upper} given Lemma \ref{lem:upper-main}]
Let $I := \texttt{Schur}(G,A\cup B)$. For any two vertices $u,v\in A\cup B$, $c_{uv}^I \ge c_{uv}^G$. Therefore, $\text{vol}_I(A)\ge 2\sum_{u,v\in A} c_{uv}^G$ and $\text{vol}_I(B)\ge 2\sum_{u,v\in B} c_{uv}^G$. By the ``Large interior'' guarantee of Lemma \ref{lem:upper-main}, $2\sum_{u,v\in A} c_{uv}^G \ge (3/4)\text{vol}_G(A)$ and $2\sum_{u,v\in B} c_{uv}^G \ge (3/4)\text{vol}_G(B)$. Therefore, $$\rho_{A,B}^G\le 4/3\sigma_{A,B}^G\le 1280\lambda_G$$ by the ``Low Schur complement fractional conductance'' guarantee when $\lambda_G\le 1/25600$, as desired. When $\lambda_G > 1/25600$, the lemma is trivially true, as desired.
\end{proof}

Now, we implement $\SweepCut$. The standard Cheeger sweep examines all thresholds $q\in \mathbb{R}$ and for each threshold, computes the fractional conductance of the cut $\partial S_{\le q}$ of edges from vertices with eigenvector coordinate at most $q$ to ones greater than $q$. Instead, the algorithm $\SweepCut$ examines all thresholds $q\in \mathbb{R}$ and computes an upper bound (a proxy) for the $\sigma_{S_{\le q/2},S_{\ge q}}^G$ for each positive $q$ and $\sigma_{S_{\le q},S_{\ge q/2}}^G$ for each negative $q$. Let $I_q := \texttt{Schur}(G,S_{\ge q}\cup S_{\le q/2})$ for $q > 0$ and $I_q := \texttt{Schur}(G,S_{\le q}\cup S_{\ge q/2})$. Let $\kappa_q(y) := \min(q,\max(q/2,y))$ for $q > 0$ and $\kappa_q(y) = \min(q/2,\max(q,y))$ for $q\le 0$. The proxy is the following quantity, which is defined for a specific shift of the Rayleigh quotient minimizer $y\in \mathbb{R}^{V(G)}$.

$$\widehat{c}^{I_q}(S_{\ge q},S_{\le q/2}) := \frac{4}{q^2} \sum_{e=uv\in E(G)} c_e^G (\kappa_q(y_u) - \kappa_q(y_v))^2$$
for $q > 0$ and

$$\widehat{c}^{I_q}(S_{\le q},S_{\ge q/2}) := \frac{4}{q^2} \sum_{e=uv\in E(G)} c_e^G (\kappa_q(y_u) - \kappa_q(y_v))^2$$
for $q\le 0$. We now show that this is indeed an upper bound:

\begin{proposition}\label{prop:proxy-bound}
For all $q > 0$,

$$c^{I_q}(S_{\le q/2},S_{\ge q}) \le \widehat{c}^{I_q}(S_{\le q/2},S_{\ge q})$$
For all $q\le 0$,

$$c^{I_q}(S_{\le q},S_{\ge q/2}) \le \widehat{c}^{I_q}(S_{\le q},S_{\ge q/2})$$
\end{proposition}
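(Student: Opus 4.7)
The plan is to combine Theorems~\ref{thm:schur-cond-1} and \ref{thm:schur-cond-2} to get a variational formula for $c^{I_q}(S_{\le q/2},S_{\ge q})$, and then exhibit the clipped-and-shifted vector $\kappa_q(y)$ as a feasible test potential whose energy is exactly $\widehat{c}^{I_q}(S_{\le q/2},S_{\ge q})$.

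More concretely, I would start by applying Theorem~\ref{thm:schur-cond-1} with $C = S_{\le q/2}$, $D = S_{\ge q}$, and $H = G$ to write
\[
c^{I_q}(S_{\le q/2},S_{\ge q}) \;=\; \frac{1}{\chi_{cd}^T L_{G/(S_{\le q/2},S_{\ge q})}^{\dagger}\chi_{cd}},
\]
where $c,d$ are the two contracted super-vertices. Then, Theorem~\ref{thm:schur-cond-2} applied to $G/(S_{\le q/2},S_{\ge q})$ with $s=c$ and $t=d$ turns this into the variational bound
\[
c^{I_q}(S_{\le q/2},S_{\ge q}) \;=\; \min_{p \,:\, p_c \le 0,\, p_d \ge 1} p^T L_{G/(S_{\le q/2},S_{\ge q})} p.
\]
So any feasible $p$ yields an upper bound.

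For $q>0$, I would plug in the test potential $p_u := (2\kappa_q(y_u)-q)/q$, viewed as a function on the contracted graph by noting that $\kappa_q$ is constant $q/2$ on $S_{\le q/2}$ and constant $q$ on $S_{\ge q}$, so $p$ is well-defined on $V(G/(S_{\le q/2},S_{\ge q}))$, with $p_c = 0$ and $p_d = 1$. Because $p$ is constant on each of the contracted sets, the contraction does not change the Laplacian quadratic form: $p^T L_{G/(S_{\le q/2},S_{\ge q})} p = p^T L_G p$. A direct calculation then gives
\[
p^T L_G p \;=\; \sum_{uv \in E(G)} c^G_{uv}\!\left(\frac{2(\kappa_q(y_u)-\kappa_q(y_v))}{q}\right)^2 \;=\; \frac{4}{q^2}\sum_{uv \in E(G)} c^G_{uv}\,(\kappa_q(y_u)-\kappa_q(y_v))^2,
\]
which is exactly $\widehat{c}^{I_q}(S_{\le q/2},S_{\ge q})$. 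Combining with the variational bound proves the $q>0$ case; the $q\le 0$ case is symmetric, using the same construction with the roles of $q$ and $q/2$ swapped in the definition of $\kappa_q$.

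There is no real obstacle here beyond choosing the right test potential and verifying it is constant on each contracted set so that the quadratic form equals that on the original $G$; the rest is the definitions of $\widehat{c}^{I_q}$ and $\kappa_q$.
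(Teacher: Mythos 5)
Your proof is correct and takes essentially the same route as the paper: combine Theorems~\ref{thm:schur-cond-1} and \ref{thm:schur-cond-2} to get a variational characterization of $c^{I_q}$, then exhibit $p_a = \tfrac{2}{q}\kappa_q(y_a)-1$ as a feasible test potential whose energy is exactly $\widehat{c}^{I_q}$. The only (minor, favorable) difference is that you phrase the variational formula on the contracted graph $G/(S_{\le q/2},S_{\ge q})$ and then lift, whereas the paper writes the minimization directly over $p\in\mathbb{R}^{V(G)}$ with the inequality constraints $p_a\le 0$ on $S_{\le q/2}$ and $p_a\ge 1$ on $S_{\ge q}$; your version avoids the (easy but implicit) truncation argument needed to justify that the relaxed minimum over $V(G)$ still equals $c^{I_q}$.
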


\begin{proof}
We focus on the $q > 0$, as the reasoning for the $q\le 0$ case is the same. By Theorems \ref{thm:schur-cond-1} and \ref{thm:schur-cond-2},

$$c^{I_q}(S_{\le q/2},S_{\ge q}) = \min_{p\in \mathbb{R}^{V(G)}: p_a \le 0 \forall a\in S_{\le q/2}, p_a \ge 1 \forall a\in S_{\ge q}} p^TL_Gp$$
The vector $p$ with $p_a := \frac{2}{q}\kappa_q(y_a) - 1$ for all vertices $a\in V(G)$ is a feasible solution to the above optimization problem with objective value $\widehat{c}^{I_q}(S_{\le q/2},S_{\ge q})$. This is the desired result.

\end{proof}

This proxy allows us to relate Schur complement fractional conductances together across different thresholds $q$ in a similar proof to the proof of the upper bound of Cheeger's inequality given in \cite{T11}. One complication in our case is that Schur complements for different values of $q$ overlap in their eliminated vertices. Our choice of $\le q/2$, $\ge q$ plays a key role here (as opposed to $\le 0$, $\ge q$, for example) in ensuring that the overlap is small. We now give the algorithm $\SweepCut$:\\

\begin{algorithm}[H]
\SetAlgoLined
\DontPrintSemicolon
\caption{$\SweepCut(G)$}

\KwIn{A graph $G$ with $\lambda_G\le 1/25600$}

\KwOut{Two sets of vertices $A$ and $B$ satisfying the guarantees of Lemma \ref{lem:upper-main}}

$z\gets $ vector with $z^T N_G z\le 2\lambda_G z^T z$ and $z^T (D_G^{1/2} \textbf{1}_{V(G)}) = 0$\;

$x\gets D_G^{-1/2} z$\;

$y\gets x - \alpha \textbf{1}_{V(G)}$ for a value $\alpha$ such that $\text{vol}_G(\{v : y_v \le 0\}) \ge \text{vol}_G(V(G)) / 2$ and  $\text{vol}_G(\{v : y_v \ge 0\}) \ge \text{vol}_G(V(G)) / 2$\;

\ForEach{$q \in \mathbb{R}$}{
    $S_{\ge q}\gets $ vertices with $y_v \ge q$\;

    $S_{\le q}\gets $ vertices with $y_v \le q$\;
} 

\ForEach{$q > 0$}{
    \If{(1) $\widehat{c}^{I_q}(S_{\le q/2},S_{\ge q}) \le 640\lambda_G \min(\text{vol}_G(S_{\le q/2}), \text{vol}_G(S_{\ge q})))$, (2) $c^G(\partial S_{\ge q/2})\le 1/4\text{vol}_G(S_{\ge q})$, and (3) $\phi_{S_{\ge q}}\le 1/4$}{

        \Return{$(S_{\le q/2}, S_{\ge q})$}

    }
}

\ForEach{$q \le 0$}{
    \If{(1) $\widehat{c}^{I_q}(S_{\ge q/2},S_{\le q}) \le 640\lambda_G \min(\text{vol}_G(S_{\ge q/2}), \text{vol}_G(S_{\le q})))$, (2) $c^G(\partial S_{\ge q/2})\le 1/4\text{vol}_G(S_{\le q})$, and (3) $\phi_{S_{\le q}}\le 1/4$}{

        \Return{$(S_{\le q}, S_{\ge q/2})$}

    }
}

\end{algorithm}

Our analysis relies on the following key technical result, which we prove in Appendix \ref{app:tech}:

\begin{proposition}\label{prop:kappa-int}
For any $a,b\in \mathbb{R}$,
$$\int_0^{\infty} \frac{(\kappa_q(a) - \kappa_q(b))^2}{q} dq \le 10 (a - b)^2$$
\end{proposition}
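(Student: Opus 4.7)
The plan is to represent $\kappa_q(b) - \kappa_q(a)$ as an integral of an indicator function in the clamped variable, then swap the order of integration and bound things elementarily. WLOG $a \le b$ (the integrand is symmetric in $a$ and $b$). For $q > 0$, the function $y \mapsto \kappa_q(y) = \min(q, \max(q/2, y))$ is nondecreasing and $1$-Lipschitz with (weak) derivative $\kappa_q'(y) = \mathbf{1}[q/2 < y < q]$, so
\[
\kappa_q(b) - \kappa_q(a) = \int_a^b \mathbf{1}[q/2 < t < q] \, dt.
\]

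Squaring this identity and applying Tonelli gives
\[
\int_0^{\infty} \frac{(\kappa_q(a) - \kappa_q(b))^2}{q}\, dq \;=\; \int_a^b \!\int_a^b \int_0^{\infty} \frac{\mathbf{1}[q/2 < t_1 < q]\, \mathbf{1}[q/2 < t_2 < q]}{q}\, dq\, dt_1\, dt_2.
\]
For fixed $t_1, t_2$, the innermost integrand is supported on $\{q>0 : t_1, t_2 > 0,\ q \in (\max(t_1, t_2), 2\min(t_1, t_2))\}$; this interval is nonempty iff $\max(t_1, t_2) < 2\min(t_1, t_2)$, in which case the $q$-integral equals $\ln(2\min(t_1, t_2)/\max(t_1, t_2)) \le \ln 2$.

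Bounding the inner integral uniformly by $\ln 2$ and observing that the $(t_1, t_2)$-region where it is nonzero is contained in $[\max(a, 0), b]^2$, which has area at most $(b-a)^2$, the full triple integral is at most $(\ln 2)(b - a)^2 < 10(b - a)^2$, which is the desired conclusion. The main (minor) obstacle is spotting the integral representation of $\kappa_q(b) - \kappa_q(a)$; once one has it, the rest is Tonelli plus an elementary length bound on a logarithmic interval, and the resulting constant $\ln 2 < 1$ is in fact far better than the $10$ stated in the proposition.
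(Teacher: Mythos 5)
Your proof is correct, and it takes a genuinely different route from the paper's. The paper proceeds by direct case analysis — splitting on $a\le 0$, on $0<a$ with $b\le 2a$, and on $0<a$ with $b>2a$ — and in each case evaluates or crudely bounds the resulting piecewise integral; the final constant that emerges from those cases is roughly $\ln 4 \approx 1.39$ on the worst branch, rounded up to $10$ for convenience. You instead observe that $\kappa_q$ is $1$-Lipschitz with weak derivative $\mathbf{1}[q/2 < y < q]$, write the difference as $\int_a^b \mathbf{1}[q/2<t<q]\,dt$, square, and apply Tonelli to reduce the problem to bounding $\int_0^\infty \mathbf{1}[q\in(\max(t_1,t_2),\,2\min(t_1,t_2))]\,dq/q$ uniformly by $\ln 2$. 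This has two advantages over the paper's argument: it avoids the case split entirely (the interval-of-support description handles all sign configurations of $a,b$ at once, including the degenerate $b\le 0$ case where the support is empty), and it yields the sharper constant $\ln 2 < 1$ in place of $10$. The tradeoff is that it asks the reader to accept the weak-derivative/FTC step and a Tonelli swap, whereas the paper's version is a purely mechanical computation; for a technical appendix either is appropriate, but your version is tighter and, I'd say, more illuminating about why the factor-of-two window $[q/2,q]$ in the definition of $\kappa_q$ is what controls the constant.
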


\begin{proof}[Proof of Lemma \ref{lem:upper-main}]
\textbf{Algorithm well-definedness.} We start by showing that $\SweepCut$ returns a pair of sets. Assume, for the sake of contradiction, that $\SweepCut$ does not return a pair of sets. Let $I_q := \texttt{Schur}(G, S_{\ge q}\cup S_{\le q/2})$ for $q > 0$ and $I_q := \texttt{Schur}(G, S_{\le q}\cup S_{\ge q/2})$ for $q \le 0$. By the contradiction assumption, for all $q > 0$, $$\text{vol}_G(S_{\ge q})\le \frac{\widehat{c}^{I_q}(S_{\ge q},S_{\le q/2})}{640\lambda_G} + 4 c^G(\partial S_{\ge q}) + 4 c^G(\partial S_{\le q/2})$$ and for all $q < 0$, $$\text{vol}_G(S_{\le q})\le \frac{\widehat{c}^{I_q}(S_{\le q},S_{\ge q/2})}{640\lambda_G} + 4 c^G(\partial S_{\le q}) + 4 c^G(\partial S_{\ge q/2})$$ Since $\sum_{v\in V(G)} c_v^G x_v = 0$, $$\sum_{v\in V(G)} c_v^G x_v^2 \le \sum_{v\in V(G)} c_v^G y_v^2$$ Now, we bound the positive $y_v$ and negative $y_v$ parts of this sum separately. Negating $y$ shows that it suffices to bound the positive part. Order the vertices in $S_{\ge 0}$ in decreasing order by $y_v$ value. Let $v_i$ be the $i$th vertex in this ordering, let $k := |S_{\ge 0}|$, $y_{k+1} := 0$, $y_i := y_{v_i}$, $c_i := c_{v_i}^G$, and $S_i := \{v_1,v_2,\hdots,v_i\}$ for each integer $i \in [k]$. Then

\begin{align*}
\sum_{v\in S_{\ge 0}} c_v^G y_v^2 &= \sum_{i=1}^k c_i y_i^2\\
&= \sum_{i=1}^k (\text{vol}_G(S_i) - \text{vol}_G(S_{i-1}))y_i^2\\
&= \sum_{i=1}^k \text{vol}_G(S_i)(y_i^2 - y_{i+1}^2)\\
&= 2\int_0^{\infty} \text{vol}_G(S_{\ge q}) q dq\\
\end{align*}

\noindent By our volume upper bound from above,

\begin{align*}
2\int_0^{\infty} \text{vol}_G(S_{\ge q}) q dq &\le 2\int_0^{\infty}\frac{\widehat{c}^{I_q}(S_{\ge q},S_{\le q/2})}{640\lambda_G} q dq + 8\int_0^{\infty} c^G(\partial S_{\ge q}) q dq + 8\int_0^{\infty} c^G(\partial S_{\le q/2}) q dq\\
&= 2\int_0^{\infty}\frac{\widehat{c}^{I_q}(S_{\ge q},S_{\le q/2})}{640\lambda_G} q dq + 8\int_0^{\infty} c^G(\partial S_{\ge q}) q dq + 8\int_0^{\infty} c^G(\partial S_{> q/2}) q dq\\
&= 2\int_0^{\infty}\frac{\widehat{c}^{I_q}(S_{\ge q},S_{\le q/2})}{640\lambda_G} q dq + 40\int_0^{\infty} c^G(\partial S_{\ge q}) q dq\\
\end{align*}

\noindent Substitution and Proposition \ref{prop:kappa-int} show that

\begin{align*}
2\int_0^{\infty} \text{vol}_G(S_{\ge q}) q dq&\le 8\sum_{e=uv\in E(G)} c_e^G\int_0^{\infty}\left(\frac{(\kappa_q(y_u) - \kappa_q(y_v))^2}{640\lambda_G q} + 5\mathbbm{1}_{q\in [y_u,y_v]} q\right)dq\\
&\le 8\sum_{e=uv\in E(G)} c_e^G \left(\frac{10}{640\lambda_G}(y_u - y_v)^2 + 5|y_u^2 - y_v^2|\right)
\end{align*}

\noindent By Cauchy-Schwarz,

\begin{align*}
8\sum_{e=uv\in E(G)} c_e^G \left(\frac{10}{640\lambda_G}(y_u - y_v)^2 + 5|y_u^2 - y_v^2|\right) &\le \frac{1}{8\lambda_G}\sum_{e=uv\in E(G)} c_e^G (y_u - y_v)^2\\
&+ 40 \sqrt{\sum_{e=uv\in E(G)} c_e^G(y_u - y_v)^2}\sqrt{\sum_{e=uv\in E(G)} c_e^G(y_u + y_v)^2}\\
&\le \frac{1}{4}\sum_{v\in V(G)} c_v^G x_v^2\\
&+ 80\sqrt{\lambda_G} \sqrt{\sum_{v\in V(G)} c_v^G x_v^2}\sqrt{\sum_{v\in V(G)} c_v^G y_v^2}\\
\end{align*}

\noindent But since $\sum_{v\in V(G)} c_v^G x_v^2\le \sum_{v\in V(G)} c_v^G y_v^2$ and $\lambda_G < 1/25600$, $$\frac{1}{4}\sum_{v\in V(G)} c_v^G x_v^2 + 80\sqrt{\lambda_G} \sqrt{\sum_{v\in V(G)} c_v^G x_v^2}\sqrt{\sum_{v\in V(G)} c_v^G y_v^2} < \frac{1}{2}\sum_{v\in V(G)} c_v^G y_v^2$$
Negating $y$ shows that $\sum_{v\in S_{\le 0}} c_v^G y_v^2 < 1/2\sum_{v\in V(G)} c_v^G y_v^2$ as well. But these statements cannot both hold; a contradiction. Therefore, $\SweepCut$ must output a pair of sets.\\

\noindent \textbf{Runtime.} Computing $z$ takes $\tilde{O}(m)$ time by Theorem \ref{thm:e-vec}. Therefore, it suffices to show that the foreach loops can each be implemented in $O(m)$ time. This implementation is similar to the $O(m)$-time implementation of the Cheeger sweep.

We focus on the first foreach loop, as the second is the same with $q$ negated. First, note that the functions $\phi_{S_{\ge q}}$, $c^G(\partial S_{\ge q/2})$, and $\text{vol}_G(S_{\ge q})$ of $q$ are piecewise constant, with breakpoints at $q = y_u$ and $q = 2y_u$ for each $u\in V(G)$. Furthermore, these functions can be computed for all values in $O(m)$ time using an $O(m)$-time Cheeger sweep for each function.

Therefore, it suffices to compute the value of $\widehat{c}^{I_q}(S_{\le q/2},S_{\ge q})$ for all $q\ge 0$ that are local minima in $O(m)$ time. Let $h(q) := \widehat{c}^{I_q}(S_{\le q/2},S_{\ge q})$. Notice that the functions $h(q)$ and $h'(q)$ are piecewise quadratic and linear functions of $q$ respectively, with breakpoints at $q = y_u$ and $q = 2y_u$. Using five $O(m)$-time Cheeger sweeps, one can compute the $q^2,q$ and $1$ coefficients of $h(q)$ and the $q$ and $1$ coefficients of $h'(q)$ between all pairs of consecutive breakpoints. After computing these coefficents, one can compute the value of each function at a point $q$ in $O(1)$ time. Furthermore, given two consecutive breakpoints $a$ and $b$, one can find all points $q\in (a,b)$ with $h'(q) = 0$ in $O(1)$ time. Each local minimum for $h$ is either a breakpoint or a point with $h'(q) = 0$. Since $h$ and $h'$ have $O(n)$ breakpoints, all local minima can be computed in $O(n)$ time. $h$ can be evaluated at all of these points in $O(n)$ time. Therefore, all local minima of $h$ can be computed in $O(m)$ time. Since the algorithm does return a $q$, some local minimum for $h$ also suffices, so this implementation produces the desired result in $O(m)$ time.\\

\noindent \textbf{Low Schur complement fractional conductance.} By Proposition \ref{prop:proxy-bound}, $$c^{I_q}(S_{\ge q},S_{\le q/2})\le \widehat{c}^{I_q}(S_{\ge q},S_{\le q/2})$$ Therefore, $c^{I_q}(S_{\ge q},S_{\le q/2}) \le 640\lambda_G \min(\text{vol}_G(S_{\ge q}), \text{vol}_G(S_{\le q/2}))$ for $q \ge 0$ by the foreach loop if condition. Repeating this reasoning for $q < 0$ yields the desired result.\\

\noindent \textbf{Large interior.} By definition of $\alpha$, $\text{vol}_G(S_{\ge q}) \le \text{vol}_G(S_{\le q/2})$ for $q > 0$. Since $c^G(\partial S_{\le q/2}) \le 1/4\text{vol}_G(S_{\ge q})$, $\phi_{S_{\le q/2}}\le 1/4$, as desired.

\end{proof}

\noindent \textbf{Acknowledgements}: I want to thank Satish Rao, Nikhil Srivastava, and Rasmus Kyng for helpful discussions.

\bibliographystyle{alpha}
\bibliography{rst}

\appendix

\section{Proof of Theorem \ref{thm:res}}\label{app:res}

\begin{proof}[Proof of Theorem \ref{thm:res}]
For any two sets of vertices $S_1,S_2$ in a graph $G$,

$$\texttt{Reff}_G(S_1,S_2)\min(\text{vol}_G(S_1),\text{vol}_G(S_2)) = \frac{1}{\sigma_{S_1,S_2}^G}$$
Therefore, the desired result follows from Lemmas \ref{lem:upper} and \ref{lem:lower}.

\end{proof}

\section{Proof of Proposition \ref{prop:kappa-int}}\label{app:tech}

\begin{proof}[Proof of Proposition \ref{prop:kappa-int}]
Without loss of generality, suppose that $a\le b$. We break the analysis up into cases:\\

\noindent \textbf{Case 1: $a\le 0$}. In this case, $\kappa_q(a) = q/2$ for all $q\ge 0$, so
\begin{align*}
\int_0^{\infty} \frac{(\kappa_q(a) - \kappa_q(b))^2}{q}dq &= \int_0^b \frac{(q/2 - q)^2}{q}\\
&+ \int_b^{2b}\frac{(q/2 - b)^2}{q}dq\\
&+ \int_{2b}^{\infty}\frac{(q/2 - q/2)^2}{q}dq\\
&= \frac{b^2}{8} + \int_b^{2b} (q/4 - b + b^2/q)dq\\
&= \frac{b^2}{2} - b^2 + b^2(\ln 2)\\
&\le 10(a - b)^2\\ 
\end{align*}
as desired.\\

\noindent \textbf{Case 2: $a > 0$ and $b\le 2a$}. In this case,
\begin{align*}
\int_0^{\infty} \frac{(\kappa_q(a) - \kappa_q(b))^2}{q}dq &= \int_0^a \frac{(q - q)^2}{q}dq\\
&+ \int_a^b \frac{(a - q)^2}{q}dq\\
&+ \int_b^{2a} \frac{(a - b)^2}{q}dq\\
&+ \int_{2a}^{2b} \frac{(q/2 - b)^2}{q}dq\\
&+ \int_{2b}^{\infty} \frac{(q/2 - q/2)^2}{q}dq\\
&\le \int_a^{2b} \frac{(a - b)^2}{q}dq\\
&= (a - b)^2 \ln (2b/a)\\
&\le (a - b)^2 \ln 4 \le 10 (a - b)^2\\
\end{align*}
as desired.\\

\noindent \textbf{Case 3: $a > 0$ and $b > 2a$}. In this case,
\begin{align*}
\int_0^{\infty} \frac{(\kappa_q(a) - \kappa_q(b))^2}{q}dq &= \int_0^a \frac{(q - q)^2}{q}dq\\
&+ \int_a^{2a} \frac{(a - q)^2}{q}dq\\
&+ \int_{2a}^b \frac{(q/2 - q)^2}{q}dq\\
&+ \int_b^{2b} \frac{(q/2 - b)^2}{q}dq\\
&+ \int_{2b}^{\infty} \frac{(q/2 - q/2)^2}{q}dq\\
&\le \int_a^{2b} \frac{(q/2 - q)^2}{q}dq\\
&\le b^2/2\\
&\le 2(a - b)^2 \le 10(a - b)^2\\
\end{align*}
as desired.

\end{proof}

\end{document}